\newtheorem{thm}{Theorem}
\newtheorem{ex}{Example}
\newtheorem{rem}{Remark}
\newtheorem{lem}{Lemma}
\newtheorem{defa}{Definition}
\newtheorem{ass}{Assumption}
\newtheorem{prob}{Problem}
\newcommand{\ind}{1\!\!1}
\newcommand{\nn}{\nonumber}
\newcommand{\M}{\mathcal{M}}
\newcommand{\f}{\mathcal{F}}
\newcommand{\e}{\mathrm{E}}
\newcommand{\R}{\mathbb{R}}
\newcommand{\mep}{\mathbb{P}}
\newcommand{\meq}{\mathbb{Q}}
\newcommand{\q}{\mathcal{Q}}
\newcommand{\dmc}{\mathcal{D}}
\newcommand{\cmc}{\mathcal{C}}
\newcommand{\bmc}{\mathcal{B}}
\newcommand{\ymc}{\mathcal{Y}}
\newcommand{\xmc}{\mathcal{X}}
\newcommand{\g}{\mathcal{G}}
\newcommand{\alp}{\begin{equation*}\left\{\begin{array}{lcl}}
\newcommand{\dal}{\end{array}\right.\end{equation*}}
\newcommand{\alpn}{\begin{equation}\left\{\begin{array}{lcl}}
\newcommand{\daln}{\end{array}\right.\end{equation}}
\newcommand{\bq}{\begin{equation*}}
\newcommand{\eq}{\end{equation*}}		
\newcommand{\bqn}{\begin{equation}}
\newcommand{\eqn}{\end{equation}}
\newcommand{\bqq}{\begin{eqnarray*}}
\newcommand{\eqq}{\end{eqnarray*}}
\newcommand{\bqqn}{\begin{eqnarray}}
\newcommand{\eqqn}{\end{eqnarray}}
\DeclareMathOperator*{\sumpport}{\searrow}
\title{Risk- and ambiguity-averse portfolio optimization with quasiconcave utility functionals} 
\author{Sigrid K\"allblad\thanks{CMAP, Ecole Polytechnique, Paris. Email: sigrid.kallblad@cmap.polytechnique.fr. The work was conducted as part of D.Phil. thesis at University of Oxford and was supported by Santander Graduate Scholarship and the Oxford-Man Institute of Quantitative Finance.}}
\begin{document}

\maketitle

\begin{abstract}

Motivated by recent axiomatic developments, we study the risk- and ambiguity-averse investment problem where trading takes place over a fixed finite horizon and terminal payoffs are evaluated according to a criterion defined in terms of a quasiconcave utility functional. We extend to the present setting certain existence and duality results established for the so-called variational preferences by Schied (2007). The results are proven by building on existing results for the classical utility maximization problem.


\end{abstract}

\section{Introduction}

		The optimal investment problem of choosing the best way to allocate an investor's capital is often formulated as the problem of maximizing, over admissible investment strategies, the expected utility of terminal wealth. The formulation relies on the axiomatic foundation developed by von Neumann and Morgenstern \cite{neumann07} and Savage \cite{Savage:1954wo}. In continuous time optimal portfolio selection, the study dates back to the seminal contributions of Merton \cite{merton,merton71}. In order to formulate the expected utility criterion, the agent needs to specify, on the one hand, her preferences via the investment horizon and the utility function and, on the other, her views about the future in providing the probability measure to compute the expectation. The specification of the latter may, however, itself be subject to uncertainty. This is referred to as ambiguity, or Knightian uncertainty in reference to the original contribution of Knight \cite{Knight:21}. It has been brought to prominence via the Ellsberg paradox \cite{ellsberg1961risk}. 	
		
	

	From a decision theoretic point of view, the issue was addressed in the seminal work of Gilboa and Schmeidler \cite{gilboa89}. They formulated axioms on investors' preferences that should account for aversion against both ambiguity and risk. Specifically, within an Anscombe-Aumann model, the axioms of von Neumann and Morgenstern were relaxed in that the axiom of independence was replaced by that of certainty independence. That led to a numerical representation of preferences in terms of a \emph{coherent} monetary utility functional. The robust representation of the latter (cf. \cite{delbaen00} for the case of bounded random variables), then yield the following representation of preferences over objects $X$ within some set $\mathcal{X}$: 
	\bqn
		X\longmapsto\inf_{\meq\in\q}\e^\meq\big[U(X)\big],\label{utmmula}
	\eqn
	for some von Neumann Morgenstern utility function $U$ and set $\q$ of probability measures.

	This motivated the study of the risk- and ambiguity-averse investor who trades over a fixed finite horizon in a continuous-time market model and evaluates terminal wealth according to the criterion \eqref{utmmula}. The investment problem associated with such, so-called, multiple-priors preferences has been well-studied. Stochastic control methods have successfully been applied and explicit solutions obtained for the choice of specific market models and utility functions. Specifically, for stochastic factor models and for non-Markovian models, solutions have been obtained, respectively, in terms of PDE's in \cite{hernandez2,owari09} and BSDEs in \cite{muller05,quenez04} (see \cite{tre} for a full list of references). For general market models and utility functions, we mention in particular the work by Quenez \cite{quenez04} and Schied and Wu \cite{wu}. Relying on the results for the classical utility maximization problem in Kramkov and Schachermayer \cite{kramkov,kramkov03}, the authors in \cite{quenez04} and \cite{wu} established a dual formulation and proved existence of an optimizer (see also \cite{burgert05,wittmuss08} for the case including consumption).


	The axiomatic results of Gilboa and Schmeidler were later generalized in Maccheroni et al. \cite{maccheroni06}, where the independence axiom was further relaxed. This led to a numerical representation in terms of a \emph{concave} monetary utility functional. Combined with the generalisation of the representation of coherent utility functionals to concave ones, for the case of bounded random variables obtained in F\"ollmer and Schied \cite{follmer02} and Fritelli and Rosazza-Gianin \cite{fritelli02}, it implied the numerical representation   
	\bqn
		X\longmapsto\inf_{\meq\in\q}\Big(\e^\meq\big[U(X)\big]+\gamma(\meq)\Big),\label{utmvara}
	\eqn
	for some penalty function $\gamma$. While the multiple-priors setup in \eqref{utmmula} is a worst-case approach, the appearance of $\gamma$ enables the investor to weight the possible market models according to their plausibility. This renders the presentation intuitively appealing.

	The investment-problem associated with these so-called variational preferences has also been studied. Particular attention has been paid to the case when the penalty function is given by the relative entropy of $\meq$ with respect to the reference model. Such criteria were introduced already in the seminal work of Hansen and Sargent \cite{anderson03,hansen01}. For this choice, the problem is most naturally formulated with respect to utility from consumption (or stochastic differential utilities) and the natural tool is the theory of BSDEs. While a systematic study was initiated in \cite{skiadas03}, these results have been considerably extended in a number of articles \cite{bordigoni07,chen02,faidi11,jeanblanc12,lazrak03}. For the case of utility from terminal wealth and general variational criteria, stochastic control methods have been successfully applied for the choice of logarithmic utility. This for stochastic factor models as well as non-Markovian ones; see \cite{hernandez,hernandez07b} and \cite{laeven12}, respectively. General existence and duality results have also been established for the variational preferences. Specifically, by use of similar methods to the ones used in \cite{quenez04,wu}, Schied \cite{schied} generalised these results to the concave case.

	The decision theoretic results in \cite{gilboa89} and \cite{maccheroni06}, were recently yet further extended by developments in Cerreia-Vioglio et al. \cite{cerreia}. Observing that all ambiguity averse preferences are axiomatized by a weakening of the independence axiom (the coordinate independence axiom within the Savage setting), the authors in \cite{cerreia} take this to its extreme by essentially removing this axiom altogether. Thereby, a numerical representation in terms of \emph{quasiconcave} utility functionals emerges. Recent advances also yield robust representations of the latter; besides \cite{cerreiab,cerreia}, see Drapeau and Kupper \cite{drapeau10} and Fritelli and Maggis \cite{fritelli11,fritelli12}. This motivates representing preferences via  
	\bqn
		X\longmapsto\inf_{\meq\in\q}G\Big(\meq,\e^\meq\big[U(X)\big]\Big),\label{quasia}
	\eqn
	for some function $G$ which is jointly quasiconvex, lower semicontinuous in its first argument and non-decreasing and right-continuous in its second. Similarly to the multiple-priors and variational cases (cf. \eqref{utmmula} and \eqref{utmvara}), these advances motivate the study of the associated investment problem. The aim herein is to initiate such a study.

Specifically, within a dominated setup, we consider an investor who trades over a fixed finite horizon in a continuous-time market model, evaluates terminal wealth according to \eqref{quasia} and maximizes this quantity over admissible trading strategies. While the investor's risk-aversion is governed by a standard utility function, the ambiguity preferences are, thus, determined by a quasiconcave utility functional. To the best of our knowledge, this problem has not been studied before. Indeed, although the advances in \cite{cerreia} recently motivated the use of quasiconvex risk measures within the area of portfolio optimization (see \cite{mastrogiacomo13}) and quasiconcave utility functions have been previously studied, the risk- and ambiguity-averse utility maximization problem associated with \eqref{quasia} seems to not have been addressed. We note that the notion is \emph{unifying} in the sense that all ambiguity averse preferences, in particular the multiple-priors, entropic and variational ones, are included as special cases. The class of quasiconcave preferences also includes interesting examples which do not correspond to variational preferences. Among others, the so-called smooth preferences axiomatized in \cite{klibanoff05}, which amount to considering a distribution over possible distributions rather than a worst-case approach.

	Our results extend the ones in \cite{schied} (cf. also \cite{quenez04,wu}) in that we prove existence of an optimal strategy and establish certain duality results for the quasiconcave case. As holds for the classical utility maximization, the study of the problem within the dual domain offers various advantages. Most importantly, in the case of robust preferences the dual problem amounts to a search for an infimum whereas the primal problem features a saddle-point. Similarly to \cite{schied}, we prove our results by building on existing results for the classical utility maximization problem (cf. \cite{kramkov,kramkov03}). However, the quasi-concavity, as opposed to the concavity, of the utility functional, required a slightly different approach to the one used in \cite{schied}. In particular, we obtain alternative proofs of some of the results therein.


	The paper is organized as follows. In Section \ref{model4}, the market model and the investment criterion are specified. The main results are presented in Section \ref{main4} while the proofs and further remarks are given in Section \ref{proof4}. The proof of an auxiliary Lemma is deferred to the appendix.

\section{The market model and the investment criterion}\label{model4}

	We consider a fixed finite horizon $T>0$ and a given filtered probability space $(\Omega,\mathbb{F},(\f_t)_{t\in[0,T]},\mep)$, where $\mep$ is the so-called reference measure. We first precise the investor's preferences. Then, in turn, the market model and the investment problem are specified. 

	The investor's \emph{risk} preferences are quantified via a utility function $U:(0,\infty)\to\R$, which is strictly increasing, strictly concave and satisfies the Inada conditions, 
		\bqn
			\lim_{x\to 0}U'(x)=\infty\quad\textrm{and}\quad\lim_{x\to\infty}U'(x)=0.\label{inadaq}
		\eqn
	As argued in the Introduction, following \cite{cerreia}, the investor's \emph{ambiguity} aversion is specified via a quasiconcave utility functional. The latter is a mapping $\phi:L^\infty\to[-\infty,\infty]$ that satisfies quasiconcavity and monotonicity. Recall that a function is quasiconcave if it is the negative of a quasiconvex function and that a function is quasiconvex if its level-sets are convex. For a function $f:\mathcal{X}\to\R$, that is equivalent to 
			$f\big(\lambda x+(1-\lambda)y\big)\le \max\big\{f(x),f(y)\big\}$, $x,y\in\mathcal{X}$.
	Hence, a quasiconcave utility functional (the negative of a quasiconvex risk measure) satisfies
		\bq
			\phi\big(\lambda X+(1-\lambda) Y)\big)\ge \min\big\{\phi(X),\phi(Y)\big\},\quad\textrm{and}\quad
			\phi(X)\ge \phi(Y), \;\;\textrm{if $X\ge Y$, $\mep$-a.s.}
		\eq
	However, it need satisfy neither cash-invariance nor positive homogeneity. 

	The specification of our criterion relies on the robust representation result (Theorem 3.2 in \cite{drapeau10}; see also Theorem 7 in \cite{cerreia}) stating that every $\sigma(L^\infty,L^1)$-upper semicontinuous quasiconcave utility functional $\phi(X)$ admits the (robust) representation
		\bq
			\phi(X)=\inf_{\meq\in\M_1(\mep)}G\Big(\meq,\e^\meq\big[X\big]\Big), 
		\eq
	for some function $G\in\mathcal{G}$, where the set of functions $\mathcal{G}$ is specified next and $\M_1(\mep)$ is the set of $\sigma$-additive probability measures absolutely continuous with respect to $\mep$. Conversely, for any $G\in\mathcal{G}$, the function $\phi$ defined in this way is a upper semicontinuous quasiconcave utility functional.  \\
	
	\begin{defa}\label{defg}
	Let $\mathcal{G}$ denote the set of functions $G:\M_1(\mep)\times\R\to[-\infty,\infty]$ satisfying the following conditions: 

			\begin{itemize}
				\item[i)]{$G(\meq,\cdot)$ is non-decreasing and right-continuous;}
				\item[ii)]{$G$ is jointly quasiconvex;}
				\item[iii)]{$G^-(\cdot,s)$ is weakly lower semicontinuous, where
					\bq
						G^-(\meq,s)=\sup_{t<s}G(\meq,t),\;\; \meq\in\M_1(\mep).
					\eq}
				\item[iv)]{$G$ has an asymptotic maximum in the sense that 
					\bq
						AM(G):=\lim_{s\to\infty}G(\meq,s)=\lim_{s\to\infty}G(\bar\meq,s),\quad\textrm{for all $\meq,\bar\meq\in\M_1(\mep)$}.
					\eq}
			\end{itemize}
		In particular, $G(\meq,\cdot)$ is quasi-linear and upper semicontinuous.
	\end{defa}

	Combined, the above implies that we consider an investor assessing the utility of terminal payoffs, modelled as random variables on $(\Omega,\f,\mep)$, in terms of a robust utility functional of the form \eqref{quasia} with $G\in\g$. For the specific cases when $G$ corresponds to a coherent and concave utility functional, the criterion reduces, respectively, to the multiple priors and variational preferences studied in \cite{quenez04,wu} and \cite{schied}. In the same way as the study of the latter problems were motivated by the axiomatic results in \cite{gilboa89} and \cite{maccheroni06}, the study of the more general quasiconcave case relies on the recent axiomatic extensions in \cite{cerreia}. 

	Without loss of generality, we may precise \eqref{quasia} by writing
		\bqn
			X\to \inf_{\meq\in\q}G\Big(\meq,\e^\meq\big[U(X)\big]\Big), \label{utmquasi2}
		\eqn
	where, for given $G\in\g$,
		\bq
			\q:=\{\meq\in\M_1(\mep): G(\meq,t)<\infty, \;\textrm{for some $t>0$}\}.
		\eq
	We stress that while $U(X)\in L^0$ for the type of payoffs we are to consider (cf. below), the utility functionals defined with respect to $G\in\g$ are defined only for bounded random variables. Indeed, while the theory of quasiconcave risk measures has been extended to more general spaces, we do not restrict to functions $G\in\g$ for which that holds. In doing so, we follow \cite{schied} who in a similar manner studied the risk- and ambiguity-averse investment problem formulated with respect to penalty functions associated with convex risk measures on $L^\infty$. In order to ensure that \eqref{utmquasi2} is well defined, we let $\e^\meq[F]:=\infty$ if $\e^\meq[F^+]=\e^\meq[F^-]=\infty$. Moreover, we extend the domain of $G(\meq,\cdot)$ to the extended real line, in that we define $G(\meq,-\infty):=-\infty$ and $G(\meq,\infty):=AM(G)$.\\ \label{expoperator}
		
		
	\begin{ex}
	An example of preferences defined in terms of a quasiconcave utility functional are the so-called smooth criteria:
		\bqn
			X\longmapsto \phi^{(-1)}\Big(\int_{\M_1(\mep)}\phi\Big(\e^\meq\left[U(X)\right]\Big)d\mu(\meq)\Big),\label{smoothfirst}
		\eqn
	where $\phi$ is an increasing and concave function modelling the ambiguity aversion. That is, rather than taking the infimum over possible models, the agent considers a distribution $\mu$ over them. Such criteria were axiomatized in Klibanoff et al. \cite{klibanoff05} (see also \cite{strzalecki11}) under axioms stronger than those used in \cite{cerreia}. Hence, (for bounded payoffs) these criteria constitute a particular case of the quasiconcave preferences. The specific form of the associated function $G$ is given in \cite{cerreia} and yields further intuition for the criteria. 
	\end{ex}

	Next, we specify the continuous-time market model and the set of admissible trading strategies. The market model is defined by a $\R^d$-valued price process $S_t$, which is assumed to be a semi-martingale on $(\Omega,\mathbb{F},(\f_t),\mep)$. The price-process $S_t$ is assumed to satisfy the condition of NFLVR with respect to the reference-measure $\mep$. For given initial capital $x>0$, and a predictable $S$-integrable trading strategy $\pi_t$, the associated wealth-process is given by
		\bq
			X_t^\pi=x+\int_0^t\pi_sdS_s,\quad t\ge 0.
		\eq
	We denote a trading strategy admissible if $X_t^\pi\ge 0$, $\mep$-a.s. for $t\in [0,T]$ and denote the associated set of wealth-processes by $\mathcal{X}(x)$. 
	
	In consequence, we aim at studying the following investment problem. \\
	
	\begin{prob}\label{problem4}
		For given $G\in\g$, we consider the risk- and ambiguity-averse investment problem of maximizing the functional \eqref{utmquasi2} over admissible terminal payoffs $X_T^\pi$, $X\in\xmc(x)$. The associated value-function $u:\R_+\to\R$ is defined by
	\bqn
		u(x):=\sup_{X\in\xmc(x)}\inf_{\meq\in\q}G\Big(\meq, \e^\meq\big[U\big(X_T^\pi\big)\big]\Big).\label{u}
	\eqn
	\end{prob} 
	
	We stress that the set of admissible strategies $\xmc$ is defined with respect to $\mep$, the role of which is to specify the null-sets rather than representing the most likely model. In particular, we consider a dominated setting in which all measures $\meq\in\q$ are absolutely continuous with respect to the reference measure (for ambiguity averse portfolio optimization with mutually singular measures, see \cite{nutz13} and the references therein). Also note that while $u(x)$ is non-decreasing, it need be neither concave nor continuous (cf. Remark \ref{tcrm}).

	We conclude this section by defining an auxiliary optimization problem which will be crucial for the analysis that follows. For $\meq\ll\mep$, let
		\bqn
			u_\meq(x):=\sup_{X\in\xmc(x)}\e^\meq\big[U(X_T)\big],\label{primal}
		\eqn
	where $\xmc(x)$ is as defined above. We also introduce the (dual) auxiliary problem 
		\bqn
			v_\meq(y):=\inf_{Y\in\ymc(y)}\e\big[ZV(Y_T/Z)\big],\label{dual}
		\eqn
	where $Z=\frac{\mathrm{d}\meq}{\mathrm{d}\mep}$, $V(y)=\sup_{x\ge 0} (U(x)-xy)$ and $\mathcal{Y}(y)$ is the set of all positive $\mep$-supermartingales such that $Y_0=y$ and $XY$ is a $\mep$-supermartingale for all $X\in\mathcal{X}(1)$. Without further notice, these functions will also be denoted by $u_Z(x)$ and $u_Z(y)$, respectively. While the objective functions in \eqref{primal} and \eqref{dual} are defined with respect to the measure $\meq$, the set of admissible strategies and dual objects are defined with respect to the reference measure $\mep$. Hence, while it holds for $\meq\sim\mep$, that these auxiliary problems are, respectively, the standard investment problem and its dual counterpart for the market defined with respect to the measure $\meq$, this need not be the case for $\meq\ll\mep$. In particular, $S_t$ need not satisfy the condition of NFLVR with respect to $\meq$ (see further discussion in Section \ref{aux} and Remarks \ref{normaldual1} and \ref{normaldual} below).

	\section{The main results}\label{main4}
	
	In this section, we present the main results. The proofs are given in Section \ref{proof4}. The following standing assumption is imposed throughout. \\
		
	\begin{ass}\label{standing}
		There exists $\meq_0,\meq_1\in\q$ such that 
		\bqn
			u_{\meq_0}(x_0)<\infty,\textrm{ for some $x_0>0$,}\label{ass0}
		\eqn
		and 
		\bqn
			G\big(\meq_1,u_{\meq_1}(x)\big)<\infty,\textrm{ for all $x>0$.}\label{ass1}
		\eqn
	\end{ass}
	
	Note that assumptions \eqref{ass0} and \eqref{ass1} are disjoint in the sense that none of them implies the other (although, they might both be satisfied for the same $\meq\in\q$). Also note that \eqref{ass0} implies $u_{\meq_0}(x)<\infty$ for all $x>0$ and, furthermore, that Assumption \ref{standing} yields
		\bqn
			u(x)=\sup_{g\in\xmc(x)}\inf_{\meq\in\q}G\big(\meq,\e^\meq\big[U\left(f\right)\big]\big)
			\le \inf_{\meq\in\q}G\big(\meq,u_\meq(x)\big)\;<\;\infty,\quad x>0.\label{finiteness}
		\eqn

	First we establish existence of a solution to Problem \ref{problem4}. The question of uniqueness is discussed below. \\

		\begin{thm}\label{existence}
		Let $G\in\g$ such that Assumption \ref{standing} holds and assume either that
			\bqn
				v_\meq(y)<\infty,\;\; y>0, \label{aspt}
			\eqn
		for each $\meq\in\q$, or that $G(\cdot,t)$, $t\in\R$, is convex and \eqref{aspt} holds for each $\meq\in\q_e$. Then, there exists an optimal terminal payoff $\hat X\in\xmc(x)$ for which the supremum in \eqref{u} is attained. 
		\end{thm}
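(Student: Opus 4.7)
The plan is to combine a Koml\'os-type compactness argument with the quasiconcavity of
\begin{equation*}
\phi(X) := \inf_{\meq\in\q} G\bigl(\meq,\e^\meq[U(X)]\bigr),
\end{equation*}
and with uniform-integrability estimates flowing from the finiteness of $v_\meq(y)$.

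First, I would pick a maximizing sequence $(X^n)_{n\ge 1} \subset \xmc(x)$ with $\phi(X^n_T) \to u(x)$; the upper bound $u(x) < \infty$ is supplied by \eqref{finiteness}. Lemma A1.1 of Delbaen--Schachermayer applied to the terminal values $X^n_T \in L^0_+$ produces forward convex combinations $\tilde X^n \in \mathrm{conv}\{X^k : k\ge n\}\subset\xmc(x)$ such that $\tilde X^n_T \to \hat X_T$ $\mep$-a.s.\ for some $\hat X_T \ge 0$. Since the set $\{X_T: X\in\xmc(x)\}$ is convex and closed in $L^0_+$ (a standard consequence of the optional decomposition theorem, cf.\ \cite{kramkov}), there is $\hat X\in\xmc(x)$ attaining this terminal value.

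Secondly, for each fixed $\meq$ the map $X\mapsto G(\meq,\e^\meq[U(X)])$ is the composition of a non-decreasing function with a concave one, hence quasiconcave, and so is its infimum $\phi$. Consequently
\begin{equation*}
\phi(\tilde X^n_T) \;\ge\; \inf_{k\ge n} \phi(X^k_T) \;\longrightarrow\; u(x),
\end{equation*}
so it remains to prove $\phi(\hat X_T) \ge u(x)$.

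The crucial step is the pointwise bound $G(\meq,\e^\meq[U(\hat X_T)]) \ge u(x)$ at each $\meq$ with $v_\meq(y)<\infty$ for $y>0$. Combining the Fenchel inequality $U(X_T) \le V(Y_T/Z) + (Y_T/Z)X_T$, for $Y\in\ymc(y)$, with the $\mep$-supermartingale property of $YX$, one extracts the uniform $\meq$-integrability of $\{U^+(X_T): X\in\xmc(x)\}$. Together with Fatou for the negative part this gives $\e^\meq[U(\hat X_T)] \ge \limsup_n \e^\meq[U(\tilde X^n_T)]$, and since $G(\meq,\cdot)$ is non-decreasing and right-continuous (hence upper semicontinuous),
\begin{equation*}
G\bigl(\meq,\e^\meq[U(\hat X_T)]\bigr) \;\ge\; \limsup_n G\bigl(\meq,\e^\meq[U(\tilde X^n_T)]\bigr) \;\ge\; \limsup_n \phi(\tilde X^n_T) \;=\; u(x).
\end{equation*}
In the first case of the theorem this is available for every $\meq\in\q$, so taking the infimum yields $\phi(\hat X_T) \ge u(x)$ and thus optimality of $\hat X$. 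In the convex case, the bound is initially only available on $\q_e$; its extension to general $\meq\in\q$ would proceed by considering the mixture $\meq_\lambda := (1-\lambda)\meq + \lambda\meq_1 \in\q_e$ (for some fixed $\meq_1\in\q_e$), invoking convexity of $G(\cdot,t)$ to control $G(\meq_\lambda,\cdot)$, letting $\lambda\downarrow 0$, and using the weak lower semicontinuity of $G^-(\cdot,s)$ from Definition~\ref{defg}(iii).

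The main obstacle is the $\meq$-uniform integrability of $\{U^+(X_T): X\in\xmc(x)\}$ derived solely from $v_\meq(y)<\infty$---a refined variant of the key technical estimate behind Kramkov--Schachermayer's existence theorem, which must be carried through without an asymptotic elasticity assumption on $U$---together with the passage to the limit in the convex case, where only one-sided semicontinuity of $G$ is at our disposal.
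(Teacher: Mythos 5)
Your proposal follows essentially the same route as the paper: a Koml\'os argument on a maximizing sequence, quasiconcavity of the objective to keep the forward convex combinations maximizing, and upper semicontinuity of $g\mapsto G\big(\meq,\e^\meq[U(g)]\big)$ obtained from the $\meq$-uniform integrability of $U^+$ (which the paper extracts from $v_\meq(y)<\infty$ via Lemma \ref{key}) combined with Fatou for $U^-$ and the upper semicontinuity of $G(\meq,\cdot)$; the convex case is likewise handled by the interpolation argument of Schied, which the paper simply cites. The one imprecision is your claim that $\{X_T:X\in\xmc(x)\}$ is closed in $L^0_+$ --- it need not be, and one should instead pass, as the paper does, to the solid hull $\cmc(x)$, which is closed and convex by Proposition 3.1 of \cite{kramkov} and leaves the value unchanged since $U$ and $G(\meq,\cdot)$ are non-decreasing.
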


		We make two general remarks on the assumptions of the above result. \\
		
		\begin{rem}
	That $G(\cdot,t)$, $t\in\R$, is convex is a necessary but not a sufficient condition for the associated utility functional to be concave. This class of preferences therefore includes but is not limited to the variational ones. Nevertheless, the very same methods as introduced in \cite{schied} can be used to establish existence for this case, cf. the discussion in Section \ref{aux}, and in consequence the weaker condition that \eqref{aspt} only holds for $\meq\in\q_e$ suffices. \\
		\end{rem}

		\begin{rem} The assumption that \eqref{aspt} holds for $\meq\in\q$ implies (cf. Lemma \ref{key} below) that $\q=\q^f$, where 
		\bq
			\q^f:=\{\meq\in\q:u_\meq(x)<\infty\}.
		\eq
	In fact, a sufficient condition for Theorem \ref{existence} to hold, is that $\q=\q^f$ and $AE_{+\infty}(U)<1$. Indeed, similarly to the standard case (cf. Note 2 in \cite{kramkov03}), this yields the finiteness of $v_\meq(y)$, $\meq\in\q$. Yet another sufficient condition is that $v_\meq(y)<\infty$, $\meq\in\q_e$, and that for each $\meq_0\in\q$, there exists $\meq_1\in\q_e$ such that
		\bqn
			\e^{\meq_0}[U(g)]\le\e^{\meq_1}[U(g)]<\infty,\;\textrm{ for all }\;g\in\cmc(x).\label{limitasapm}
		\eqn
	Indeed, \eqref{limitasapm} implies that $u_{\meq_0}(y)\le u_{\meq_1}(y)<\infty$, where the last inequality follows from the assumption. Hence, the conjugacy relations of Lemma \ref{key} below holds with respect to $\meq_0$. This yields, 
				\bq
					v_{\meq_0}(y)=\sup_{x>0}\Big(u_{\meq_0}(y)-xy\Big)
					\le \sup_{x>0}\Big(u_{\meq_1}(y)-xy\Big)
					=v_{\meq_1}(y)<\infty.
				\eq
	In consequence, $v_{\meq_0}(y)<\infty$ for all $\meq_0\in\q$ and the assumptions of Theorem \ref{existence} hold.
		\end{rem}

	We stress that the assumption that $v_\meq(y)<\infty$, $\meq\in\q$, is rather natural. First, under some additional assumptions, it is established below (cf. Theorem \ref{interchange}) that the infimum and supremum in \eqref{u} can be interchanged. In effect, the set $\q$ can be replaced by the set $\q^f$ without affecting the indirect utility $u(x)$. More importantly, this assumption implies that the auxiliary investment problem, itself, is solvable for each individual measure $\meq\in\q$ (cf. Lemma \ref{key} below). As pointed out above, the auxiliary problem is not a standard one. Nevertheless this might be understood as an arbitrage condition put on each individual model (cf. \cite{kardaras} where it is shown that the classical utility maximization problem admits a solution if and only if the market satisfies the no arbitrage condition NA$_1$).\label{kardaras} The assumption therefore relates to the interpretation of the ambiguity averse criterion. On the one hand, the criterion \eqref{utmquasi2} emerges due to the axioms posed on the preferences; this via the robust representation of quasiconcave utility functionals. This motivation - per se - does not imply that the measures $\meq\in\q$ satisfy any market related conditions. However, it is common in the literature to also motivate ambiguity averse criteria by the fact that they, effectively, amount to taking the expectation with respect to different possible market models $\meq$ weighted according to their plausibility. While the weighting is determined by the penalty function $\gamma$ for the variational case, the function $G$ allows for more flexibility in the quasiconcave case. Given the latter interpretation, it is natural to assume that each market model $\meq\in\q$, itself constitutes a sensible market model which excludes arbitrage. This is exactly what \eqref{aspt} guarantees. The important question of how to ensure that this condition holds for specific choices of $G$ remains to be addressed.

	Under the assumptions of Theorem \ref{existence}, it is not clear whether the infimum and supremum in \eqref{u} can be interchanged (neither whether the infimum is attained). Hence, there need \emph{not} exist an auxiliary problem (cf. \eqref{primal})\label{equivaux} producing the same investment behaviour as the ambiguity averse criterion \eqref{utmquasi2}. This is, however, the case under some additional assumptions; we will come back to this (see page \pageref{auxreplace}).

	Next, we turn to the study of the dual version of the risk and ambiguity averse investment problem \eqref{utmquasi2}. Specifically, we establish relations between the primal and dual problems and their respective solutions. As holds for the variational case (cf. \cite{schied}), the study of the dual problem is, in fact, not needed for proving existence of a solution to the primal problem. Indeed, the proof of Theorem \ref{existence} relies on properties of the auxiliary problem \eqref{primal} and, thus, rather on the study of the dual version thereof. Nevertheless, the study of its dual counterpart is of interest as it gives a further understanding of the problem and of the optimal strategy. Similarly to the standard case, the study of the dual problem rather than the primal one offers various advantages. This is particularly evident for ambiguity averse preferences where the dual problem amounts to the search for a pure infimum, whereas the primal problem features a saddle-point. In particular, most articles providing explicit solutions for specific choices of utility and penalty functions, notably focus on the dual rather than the primal problem. Hence, there is reason to believe that the dual formulation is helpful in obtaining more explicit results also for the quasiconcave case. The results also enable us to draw important conclusions about the optimal strategy. Specifically, regarding the existence of an equivalent auxiliary problem and uniqueness of the optimal strategy (cf. the discussion after Theorem \ref{relation}). 
		
	We impose the following additional assumption.\\
		
		\begin{ass}\label{dualass}
			The function $G$ is jointly lower semicontinuous and the level sets $\q_t(c)$ are relatively weakly compact, where
				\bq
					\q_t(c):=\big\{\meq\in\q:G(\meq,t)\le c\big\},\quad t\in\R, c\ge 0.
				\eq 
			Furthermore, either $U:\R_+\to\R_+$ or $G(\meq,\cdot)$, $\meq\in\q$, is concave. 
		\end{ass}

		Note that since $\meq\to G(\meq,t)$ is weakly lower semicontinuous, $\q_t(c)$ is weakly closed and, thus, due to Dunford-Pettis theorem, relative weak compactness is equivalent to uniform integrability. Moreover, recall that for functions $G\in\g$, $G^-(\cdot,t)$, $t\in\R$, is lower semicontinuous and $G(\meq,\cdot)$, $\meq\in\q$, is upper semicontinuous. Hence, the assumption that $G$ is jointly lower semicontinuous is, in fact, equivalent to the assumption that $G(\meq,\cdot)$ is continuous. Both these properties are closely related to the fact that one considers (evenly) quasiconcave utility functionals which are continuous not only from above but also from below and, thus, weakly continuous (as opposed to weakly upper semicontinuous). We refer to \cite{cerreiab,bok,fritelli11,fritelli12} for further details but note that this class of quasiconcave utility functionals is natural within the present context. Indeed, in \cite{cerreia} the axioms were formulated (including a continuity axiom) so as to render a numerical representation in terms of such weakly continuous utility functionals.

		While this (natural) additional continuity assumption is crucial for the duality results, the restriction to positive utility functions is needed for technical reasons. The possibility of relaxing this assumption is left for future study. Note, however, that the results in, among others, \cite{quenez04} are established only for positive utility functions. The restriction to $G(\meq,\cdot)$ concave yields, essentially, the variational case treated in \cite{schied}. It is included as a specific case in order to compare our methods and relate our results to the ones therein. \\

	\begin{thm}\label{interchange}
		Let $G\in\g$ such that Assumptions \ref{standing} and \ref{dualass} hold. Then,
			\begin{itemize}
				\item[i)]{the robust value-function satisfies
					\bqn
					u(x):=\sup_{X\in\xmc(x)}\inf_{\meq\in\q}G\Big(\meq,\e^\meq\big[U(X_T)\big]\Big)
					=\inf_{\meq\in\q}\sup_{X\in\xmc(x)}G\Big(\meq,\e^\meq\big[U(X_T)\big]\Big),\label{hoppas}
					\eqn
				and, furthermore, 
					\bqn
						u(x)=\inf_{y>0}v(y;x),\label{udual}
					\eqn
				where $v(y;x)$ is the dual value function given by
					\bqn
						v(y;x):=\inf_{\meq\in\q}G\Big(\meq,v_\meq(y)+xy\Big).\label{vyx}
					\eqn}
				\item[ii)]{Moreover, if $v(y;x)<\infty$, then the dual problem \eqref{vyx} admits a solution $(\widehat\meq,\widehat Y)$ that is maximal in the sense that any other solution $(\meq,Y)$ satisfies $\meq\ll\widehat\meq$ and $Y_T/Z=\widehat Y_T/\widehat Z$, $\meq$-a.s. }
			\end{itemize}
		\end{thm}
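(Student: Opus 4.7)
The strategy for part (i) is to reduce both assertions to the single minimax identity $u(x)=\inf_{\meq\in\q}G(\meq,u_\meq(x))$. Interchanging sup and inf and using monotonicity of $G(\meq,\cdot)$ yields $\sup_{X}G(\meq,\e^\meq[U(X_T)])=G(\meq,u_\meq(x))$, so $u(x)\le\inf_\meq G(\meq,u_\meq(x))$ is automatic. Granted the opposite inequality, Lemma \ref{key} (classical conjugacy) gives $u_\meq(x)=\inf_{y>0}(v_\meq(y)+xy)$, and under Assumption \ref{dualass} the map $G(\meq,\cdot)$ is continuous on $\R$ (upper semicontinuous by Definition \ref{defg}, lower semicontinuous by Assumption \ref{dualass}); combined with its monotonicity this gives $G(\meq,u_\meq(x))=\inf_{y>0}G(\meq,v_\meq(y)+xy)$, and swapping infima produces \eqref{udual}.

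For the remaining direction of the minimax identity I plan to apply Sion's theorem to $f(X,\meq)=G(\meq,\e^\meq[U(X_T)])$ on $\xmc(x)\times\q$. For fixed $\meq$, $f(\cdot,\meq)$ is quasi-concave (concavity of $U$, linearity of $\e^\meq$, and monotonicity of $G(\meq,\cdot)$) and upper semicontinuous (Fatou combined with right-continuity). For fixed $X$, $f(X,\cdot)$ is quasi-convex (joint quasi-convexity of $G$ composed with the affine map $\meq\mapsto\e^\meq[U(X_T)]$) and lower semicontinuous by Assumption \ref{dualass}. The principal obstacle is compactness: neither $\xmc(x)$ nor $\q$ itself is compact, yet one of the two factors in Sion's product must be. I would overcome this by restricting the inner infimum to a weakly compact level set of $G$, using the bound $\inf_\meq G(\meq,\e^\meq[U(X_T)])\le G(\meq_1,u_{\meq_1}(x))<\infty$ from Assumption \ref{standing} (together with monotonicity of $G(\meq,\cdot)$ and the trivial bound $\e^\meq[U(X_T)]\le u_\meq(x)$) to confine minimizing sequences of $\meq$ to such a set. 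The dichotomy in Assumption \ref{dualass} (positive $U$ versus concave $G(\meq,\cdot)$) enters here in controlling the Fatou-type estimates; in the concave branch one essentially recovers the machinery of \cite{schied}.

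For part (ii), assume $v(y;x)<\infty$ and let $(\meq_n,Y^n)$ be a minimizing sequence for \eqref{vyx}. Eventually $G(\meq_n,v_{\meq_n}(y)+xy)\le v(y;x)+1$, so by Assumption \ref{dualass} the measures $\meq_n$ lie in a weakly relatively compact level set of $G$ and, along a subsequence, $\meq_n\to\widehat\meq$ weakly. A Koml\'os-type extraction on $(Y^n_T)\subset L^0_+$ produces convex combinations converging almost surely to some $\widehat Y_T$, and the associated supermartingale $\widehat Y$ lies in $\ymc(y)$ by the standard closure properties of this set. Joint lower semicontinuity of $G$, convexity of $V$, and Fatou then identify $(\widehat\meq,\widehat Y)$ as an optimizer. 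For maximality, given any second optimizer $(\meq,Y)$, I form the convex combinations $\bar\meq=(\meq+\widehat\meq)/2$ and $\bar Y=(Y+\widehat Y)/2$. By the joint convexity of the perspective map $(Z,Y)\mapsto\e[ZV(Y/Z)]$, $v_{\bar\meq}(y)\le(v_\meq(y)+v_{\widehat\meq}(y))/2$, with strict inequality unless $Y_T/Z=\widehat Y_T/\widehat Z$ on the relevant support; combined with joint quasi-convexity of $G$ and the optimality of $(\widehat\meq,\widehat Y)$, this forces $Y_T/Z=\widehat Y_T/\widehat Z$ $\meq$-a.s., and an analogous support argument on $\bar\meq$ yields $\meq\ll\widehat\meq$. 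The compactness reduction in the minimax step and the strict-convexity averaging argument for maximality are the principal technical hurdles.
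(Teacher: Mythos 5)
Your overall route for part (i) -- Sion's theorem on $\cmc(x)\times\q_t(c)$ after confining the infimum to a weakly compact level set, followed by the conjugacy of Lemma \ref{key} and the continuity of $G(\meq,\cdot)$ to pass to \eqref{udual} -- is the paper's route, and the existence half of part (ii) (minimizing sequence, level-set confinement, Koml\'os, joint lower semicontinuity of $(Z,h)\mapsto G(Z,\e[ZV(h/Z)]+xy)$) also matches. However, there are two genuine gaps.

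First, in part (i) your compactness reduction only works in the branch $U:\R_+\to\R_+$. The level sets of Assumption \ref{dualass} are $\q_t(c)=\{\meq:G(\meq,t)\le c\}$ for a \emph{fixed} $t$, and to place a minimizing sequence inside one you need a uniform lower bound $\e^\meq[U(X_T)]\ge t$ valid for all $\meq$ and all admissible $X$; the same lower bound is what makes Fatou give lower semicontinuity of $Z\mapsto\e[ZU(g)]$. When $U$ is unbounded below both steps fail, and your remark that ``in the concave branch one essentially recovers the machinery of \cite{schied}'' does not repair this: Schied's interpolation argument requires convexity of $\meq\mapsto G(\meq,t)$, which is not part of the concave-$G(\meq,\cdot)$ branch of Assumption \ref{dualass}, and Section 4.1 of the paper is devoted to explaining why that machinery is unavailable here. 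The paper's actual device is to run the Sion argument for the shifted payoff $U(\varepsilon+g)\ge U(\varepsilon)$ (so $t:=U(\varepsilon)\wedge 0$ works), obtaining $u(x)\le\inf\sup\le u(x+\varepsilon)$, and then to let $\varepsilon\searrow 0$ using that $u$ is finite and concave, hence continuous, precisely because $G(\meq,\cdot)$ is concave. This regularization and limit is missing from your proposal.

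Second, your maximality argument in part (ii) does not close. Because $G(\meq,\cdot)$ is only non-decreasing (strict monotonicity is not assumed in this theorem), a strict inequality $\e[\bar ZV(\bar h/\bar Z)]<\tfrac12\e[ZV(h/Z)]+\tfrac12\e[\widehat ZV(\widehat h/\widehat Z)]$ only yields $G(\bar Z,\cdot)\le v(y;x)$ for the averaged pair, i.e.\ another optimizer, not a value strictly below the infimum; so no contradiction forces $Y_T/Z=\widehat Y_T/\widehat Z$. Moreover, the pair produced by your Koml\'os extraction need not dominate \emph{every} other optimizer: maximality requires the separate exhaustion construction the paper borrows from Lemma 4.3 of \cite{schied} -- a sequence of optimal $\bar Z_n$ with nested supports whose $\mep$-probabilities increase to the supremum over all optimal solutions, glued to a single $Y_T$ via the $t$-independence of the ratio $h_t/Z_t$ along convex combinations of optimizers. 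You should either import that construction explicitly or supply a substitute; as written, the step from ``an optimizer exists'' to ``a maximal optimizer exists'' is asserted rather than proved.
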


			While the above result imposes stronger conditions on the structural properties of $G$ (and $U$) than needed for Theorem \ref{existence}, finiteness assumptions of the type needed for the existence are not required here. This should be related to the results in \cite{kramkov,kramkov03}, where the conjugacy relations are proven under much weaker conditions than needed for the existence.\label{discinterchange}

			For the variational case, the relation between the primal and dual value function (cf. \eqref{udual}) was established in \cite{schied} under the additional assumptions that $u_{\meq_0}(x)<\infty$ for some $\meq_0\in\q_e$ and
			\bqn
				\textrm{$v(y;x)<\infty$ implies $v_{\meq_1}(y)<\infty$ for some $\meq_1\in\q_e$.}\label{additional}
			\eqn
			Although very natural, this assumption is, in fact, not needed. However, for the variational case, a slightly stronger result can be proven under this additional assumption; cf. Remark \ref{normaldual} below.


	The next result relates the respective solutions to the primal and dual problems. In particular, the result yields the existence of an equivalent auxiliary problem and results on the uniqueness of the optimal strategy.\\

		\begin{thm}\label{relation}
			Let $G\in\g$ and assume Assumption \ref{dualass} and the assumptions of Theorem \ref{existence} to hold. Let $\bar X$ be a solution to the primal problem. Then, the primal problem admits a saddle point $(\bar X,\bar \meq)$ and there exists $y^*>0$ for which the infimum in \eqref{udual} is attained.
			
			Assume, furthermore, that $G(\meq,\cdot)$, $\meq\in\q$, is strictly increasing and let $(\widehat Y,\widehat\meq)$ be any solution to the dual problem at level $y^*$. Then, $(\bar X, \widehat\meq)$ is a saddle point for the primal problem and, moreover,
				\bqn
					\bar X=I(\widehat Y_T/\widehat Z),\quad \textrm{$\widehat\meq$-a.s.}\label{relpp}
				\eqn
		\end{thm}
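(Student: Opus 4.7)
The strategy is to locate a minimizing measure $\bar\meq$ by a weak-compactness argument applied to the reformulation $u(x)=\inf_\meq G(\meq,u_\meq(x))$ from Theorem~\ref{interchange}(i), to extract $y^*$ via the Fenchel conjugacy of the auxiliary problem (Lemma~\ref{key}), and, in the strictly increasing case, to identify $\bar X$ pointwise with $I(\widehat Y_T/\widehat Z)$ through uniqueness of the standard utility maximizer under $\widehat\meq$.

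For the first part, let $(\meq_n)$ be a minimizing sequence for $\inf_\meq G(\meq,u_\meq(x))$. The buy-and-hold strategy yields $u_{\meq_n}(x)\ge U(x)$, so the monotonicity of $G(\meq_n,\cdot)$ places $(\meq_n)$ eventually in a level set $\q_{U(x)}(c)$, which is relatively weakly compact under Assumption~\ref{dualass}; extract a weakly convergent subsequence $\meq_n\to\bar\meq$. A truncation--Fatou argument (using either $U\ge 0$ or the concavity of $G(\meq,\cdot)$ from Assumption~\ref{dualass}, in the spirit of \cite{schied}) gives lower semicontinuity of $\meq\mapsto u_\meq(x)$, and together with the joint lower semicontinuity of $G$ and the continuity of $G(\bar\meq,\cdot)$ noted in the paper, this yields $G(\bar\meq,u_{\bar\meq}(x))\le \liminf_n G(\meq_n,u_{\meq_n}(x))=u(x)$, so the infimum is attained at $\bar\meq$. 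The saddle-point identity follows from the chain
\bq
u(x)=\inf_\meq G(\meq,\e^\meq[U(\bar X_T)])\le G(\bar\meq,\e^{\bar\meq}[U(\bar X_T)])\le G(\bar\meq,u_{\bar\meq}(x))=u(x),
\eq
which collapses all inequalities to equalities. Applying Lemma~\ref{key} at $\bar\meq$ provides $u_{\bar\meq}(x)=v_{\bar\meq}(y^*)+xy^*$ for some $y^*>0$, and the chain $u(x)=G(\bar\meq,v_{\bar\meq}(y^*)+xy^*)\ge v(y^*;x)\ge u(x)$ shows that $y^*$ attains \eqref{udual}.

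For the second part, let $(\widehat Y,\widehat\meq)$ be any dual solution at $y^*$, so $G(\widehat\meq,v_{\widehat\meq}(y^*)+xy^*)=v(y^*;x)=u(x)$. The Fenchel bound $v_{\widehat\meq}(y^*)+xy^*\ge u_{\widehat\meq}(x)$, monotonicity of $G(\widehat\meq,\cdot)$, and $G(\widehat\meq,u_{\widehat\meq}(x))\ge u(x)$ force $G(\widehat\meq,u_{\widehat\meq}(x))=u(x)$, which strict monotonicity of $G(\widehat\meq,\cdot)$ upgrades to $u_{\widehat\meq}(x)=v_{\widehat\meq}(y^*)+xy^*$; thus $y^*$ is the Fenchel conjugate point for $(\widehat\meq,x)$. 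Running the saddle-point chain above with $\widehat\meq$ in place of $\bar\meq$ shows $(\bar X,\widehat\meq)$ is a saddle point, so $\bar X$ maximizes $G(\widehat\meq,\e^{\widehat\meq}[U(X_T)])$ over $\xmc(x)$; strict monotonicity reduces this to maximizing $\e^{\widehat\meq}[U(X_T)]$, so $\bar X$ solves the auxiliary primal $u_{\widehat\meq}(x)$. The conjugacy relations of Lemma~\ref{key}, applied at the matching pair $(x,y^*)$, then identify the (a.s.-unique, by strict concavity of $U$) primal optimizer $\widehat\meq$-a.s.\ as $I(\widehat Y_T/\widehat Z)$, proving \eqref{relpp}.

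The principal technical obstacle is the semicontinuity step $\liminf_n G(\meq_n,u_{\meq_n}(x))\ge G(\bar\meq,u_{\bar\meq}(x))$: since $U(X_T)$ is generally unbounded, weak convergence $\meq_n\to\bar\meq$ does not pass directly through the expectation defining $u_{\meq_n}(x)$. The two branches of Assumption~\ref{dualass} are precisely what makes this tractable---either the positivity of $U$ opens a truncation-plus-monotone-convergence Fatou argument, or the concave structure of $G(\meq,\cdot)$ allows the variational-case machinery of \cite{schied} to conclude.
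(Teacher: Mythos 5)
Your architecture is essentially the paper's: attain the minimizing measure on the level set $\q_t(c)$ with $t=U(x)$, extract $y^*$ from the auxiliary conjugacy at that measure, and collapse the chains $u(x)=v(y^*;x)=G(\widehat\meq,v_{\widehat\meq}(y^*)+xy^*)\ge G(\widehat\meq,u_{\widehat\meq}(x))\ge u(x)$ using monotonicity and then strict monotonicity — this is exactly \eqref{sol} and its surroundings, and your collapse showing $\e^{\widehat\meq}[U(\bar X_T)]=u_{\widehat\meq}(x)$ and $u_{\widehat\meq}(x)=v_{\widehat\meq}(y^*)+xy^*$ is correct. The only genuine difference in the first part is technical: the paper takes Komlos convex combinations converging a.s.\ and uses quasiconvexity together with lower semicontinuity of $Z\mapsto\sup_{g}G\big(Z,\e[ZU(g)]\big)$, whereas you take a weakly convergent subsequence and use weak lower semicontinuity of $\meq\mapsto u_\meq(x)$ (valid for $U\ge0$ by truncation); given the joint quasiconvexity the two routes are interchangeable, and, like the paper, you only really treat the $U\ge 0$ branch of Assumption \ref{dualass}, deferring the concave-$G$ branch to \cite{schied}.

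The one step you under-argue is the identification \eqref{relpp}. Lemma \ref{key} contains only the conjugacy relations, UI and existence for the auxiliary problem — not the primal--dual pointwise relation — and a.s.-uniqueness of the auxiliary maximizer by strict concavity does not by itself identify $\bar X$ with $I(\widehat Y_T/\widehat Z)$: for that reading you would additionally need that $I(\widehat Y_T/\widehat Z)$ is budget-feasible and optimal for $u_{\widehat\meq}(x)$, a Kramkov--Schachermayer-type statement not recorded in the paper. The paper closes the step directly instead, and you have all the ingredients: since $\e[\bar X_T\widehat Y_T]\le xy^*$ and $G(\widehat\meq,\cdot)$ is non-decreasing, $G\big(\widehat\meq,\e[\widehat Z V(\widehat Y_T/\widehat Z)]+\e[\bar X_T\widehat Y_T]\big)\le G\big(\widehat\meq,\e[\widehat Z V(\widehat Y_T/\widehat Z)]+xy^*\big)=v(y^*;x)=u(x)=G\big(\widehat\meq,\e[\widehat Z U(\bar X_T)]\big)$, and strict monotonicity of $G(\widehat\meq,\cdot)$ turns this into $\e^{\widehat\meq}\big[V(\widehat Y_T/\widehat Z)+\bar X_T\widehat Y_T/\widehat Z\big]\le\e^{\widehat\meq}\big[U(\bar X_T)\big]$; the reverse inequality holds pointwise by the Fenchel inequality, so equality holds $\widehat\meq$-a.s.\ and forces $\bar X_T=I(\widehat Y_T/\widehat Z)$ $\widehat\meq$-a.s. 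With that short replacement for your appeal to Lemma \ref{key}, your proof is complete and coincides with the paper's.
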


		The above result is of particular interest as it gives a further understanding of the optimal strategy. Let $\widehat\meq$ be the measure in the maximal solution to the dual (or, equivalently, primal) problem. Following the related literature, we refer to it as the least favourable measure. Relation \eqref{relpp} then implies that the optimal solution $\widehat X_T$, in fact, is $\widehat\meq$-a.s. unique. Consequently, if the least favourable measure is equivalent to $\mep$, the solution is $\mep$-a.s. unique\label{uniqueness}. In particular, it can then be recovered from the dual solution. In general, the least favourable measure need however not be equivalent to $\mep$. Nevertheless, an optimal strategy can still be constructed from a given solution of the dual problem by superhedging of an appropriate claim (cf. Corollary 2.7 in \cite{schied}).

		The existence of a saddle point also implies that, \emph{a posteriori}, there exists an auxiliary investment problem \eqref{primal} producing the same optimal behaviour as the original criterion (in particular, this always holds for the variational criteria). Specifically, the auxiliary problem defined with respect to the least favourable measure $\widehat\meq$ admits a solution (cf. Lemma \ref{key} below) which $\widehat\meq$-a.s. coincides with the solution to the original problem. Since the least-favourable measure is part of the solution to the original problem, the equivalence between the ambiguity averse and the auxiliary problem is an a posteriori result. Given the existence of an auxiliary problem (for which the objective function is concave), the $\widehat\meq$-a.s. uniqueness of the optimal strategy is natural. Under the weaker assumptions of Theorem \ref{existence}, it is not clear though whether an equivalent auxiliary problem exists (cf. the discussion on page\label{auxreplace} \pageref{equivaux}). The question of to which extent uniqueness holds under more general assumptions (given $G(\meq,\cdot)$ strictly increasing) is left for future study.  \\

	\begin{rem}[Time consistency]\label{tcrm}
		In general, Problem \ref{problem4} is not a time-consistent investment problem (see \cite{schied} for counter-examples in the variational case). A natural question is therefore under what assumptions it is. Indeed, while time-consistency is of interest in its own right, it is also of essential importance as it enables the use of stochastic control methods. In consequence, it is a prerequisite for extending to the quasiconcave case the explicit results obtained in terms of PDEs and BSDEs for the variational preferences (cf. the references in the Introduction). Time consistency of quasiconcave utility functionals (quasiconvex risk measures) remains, however, an open problem and, in particular, feasible explicit examples are few. Indeed, while necessary and sufficient conditions for temporal consistency of convex risk measures were established in \cite{delbaen10} (see also \cite{bionnadal06} and \cite{irina}), such results are, yet, lacking for the quasiconvex
		case. In recent work, \cite{fritelli11,fritelli12} initiated such a programme via the study of conditional quasiconvex risk measures. Questions of temporal consistency and the relation to $g$-expectations have also been studied within the more general framework of non-linear expectations in \cite{peng05}. 


		We also note that in \eqref{u}, the risk preferences of the investor are modelled via a standard continuous and concave utility function. While this assumption makes perfect sense for the static problem, the value function $u(x)$ will only satisfy weaker properties (a more precise study of the properties of the value function is left for future study). Hence, a study of time-consistent problems of this type also requires a study of the risk and ambiguity averse investment problem \eqref{u} under weaker assumptions on $U(x)$. In summary, while questions of time-consistency are of great interest, they impose challenging additional problems which are left for future study. 
	\end{rem}

	\section{Proofs}\label{proof4} 
	
		The above theorems extend results established for the variational preferences in \cite{schied} (cf. \cite{quenez04,wu} for the coherent case) to the quasiconcave risk- and ambiguity-averse investment criteria. Naturally, our proofs are therefore inspired by and in many ways similar to those in the former articles. Specifically, also here the idea is to establish results for the risk- and ambiguity-averse problem by relying on existing results for the auxiliary problem \eqref{primal}. As pointed out above, for $\meq\sim\mep$, this is the classical utility maximization problem as studied in \cite{kramkov,kramkov03}. For $\meq\ll\mep$, this need not be the case. In \cite{schied}, this is dealt with by use of certain limiting arguments. However, for our case, the weaker properties of $G$ imply that this approach does not apply. Hence the need for somewhat different arguments.

		In Section \ref{aux}, we explain why we were not able to directly apply the arguments developed in \cite{schied} and describe the approach we use in further detail. This illustrates the differences in the required assumptions. The existence and duality results are proven, respectively, in Sections \ref{existproof} and \ref{dualityproof}. The proof of Lemma \ref{key} below is deferred to the Appendix.


	\subsection{The auxiliary problem and its significance}\label{aux}

	For the variational case studied in \cite{schied}, the appearance of measures in $\q$ which are absolutely continuous but not necessarily equivalent to $\mep$, is addressed in the following way. For $\meq_0\in\q\setminus\q_e$, the authors let $\meq_1\in\q_e$ and define the measures $\meq_t$, $t\in[0,1]$, via the Radon-Nikodym derivative $Z_t:=(1-t)Z_0+tZ_1$, where $Z_0$ and $Z_1$ correspond, respectively, to $\meq_0$ and $\meq_1$. Then, $\meq_t\in\q_e$ for $t\in(0,1]$. Moreover, under suitable finiteness-assumptions, the functions
	\bqn
		t\to G\big(\meq_t,u_{\meq_t}(x)\big)\;\textrm{ and }\;  t\to G\big(\meq_t,v_{\meq_t}(y)+xy\big),\label{schiedcase}
	\eqn
	are continuous and upper semicontinuous, respectively. Combined, this implies that the results can be established by relying on results for auxiliary problems (cf. \eqref{primal}) defined with respect to measures equivalent to $\mep$ only. To the latter, the results in \cite{kramkov,kramkov03} can, in turn, be directly applied. In consequence, certain assumptions need only be posed on the set of measures $\q_e$ as opposed to $\q$ (cf. Theorem \ref{existence}). This approach is closely related to the fact that the set of absolutely continuous measures in the representation of a concave utility functional, under the assumption that there exists an equivalent measure for which the penalty is finite, can be replaced by the equivalent ones; cf. \cite{kloppel05}.

	The continuity properties in \eqref{schiedcase} rely, however, on the fact that for the variational case $G(\cdot,t)=\gamma(\cdot)+t$ and, thus, the mapping is convex. Indeed, this implies that the mapping 
		\bqn
			t\to G(Z_t,s), \quad t\in [0,1], s\in\R,\label{whynot}
		\eqn
	is convex as well. If it is finite, it is therefore upper-semicontinuous. According to Lemma 3.3 in \cite{wu}, $t\to u_{Z_t}(x)$ is continuous and, under rather weak assumptions, also $t\to v_{Z_t}(y)$ is upper semicontinuous. Hence, (under suitable finiteness assumptions) the required continuity properties in \eqref{schiedcase} follow. The crucial point is that for the quasiconcave case we consider, the function $G(\cdot,t)$ might not be convex. Therefore, the continuity properties in \eqref{schiedcase} might not hold and, in consequence, the arguments developed in \cite{schied} do not apply.

	Our methods are rather based on a closer study of the auxiliary problem \eqref{primal}. Although not a standard problem, it is, in fact, only a minor modification thereof and can therefore be solved by use of the same methods as developed in \cite{kramkov} (cf. Lemma \ref{key} below). By use of this observation, we then address the risk and ambiguity averse investment problem. In particular, we obtain alternative proofs of some results established in \cite{schied}. Indeed, once the relevant properties have been obtained for auxiliary problems with $\meq\ll\mep$, the proofs can be simplified. In particular, our approach enables establishing the relation between the primal and dual value functions (cf. \eqref{udual} below) under slightly weaker assumptions than in \cite{schied}. However, for the existence of an optimizer, stronger assumptions are required. These assumptions, which are a consequence of the more general type of criteria we consider, are still economically feasible; see the discussion after Theorem \ref{existence} above.

	The properties of the auxiliary problem that will be made use of are presented next.\\
			
		\begin{lem}\label{key}
			Let $U$ satisfy the Inada conditions (cf. \eqref{inadaq}) and assume $u_Z(x_0)<\infty$ for some $x_0>0$. Then, the function $u_Z(x)$ and $v_Z(y)$ defined, respectively, in \eqref{primal} and \eqref{dual} satisfy the following: 
			\begin{itemize}
				\item[i)]{It holds that
				\bq
					v_Z(y)=\sup_{x>0}\big(u_Z(x)-xy\big)
					\;\;\textrm{and}\;\;
					u_Z(y)=\inf_{y>0}\big(v_Z(x)+xy\big),
				\eq
			and, furthermore, 
				\bq
					u_Z'(0)=\infty\;\;\textrm{and}\;\; v_Z'(\infty)=0.
				\eq}
				\item[ii)]{Under the additional assumption that $v_Z(y)<\infty$, $y>0$, it also holds that
				\bq
					u_Z'(\infty)=0\;\;\textrm{and}\;\; v_Z'(0)=-\infty.
				\eq
			Moreover, the set $\{ZU^+(g):g\in\cmc(x)\}$ is $\mep$-UI and the primal problem admits a solution.}
			\end{itemize} 
		\end{lem}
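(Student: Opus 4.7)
The plan is to reduce this to a classical expected-utility problem with a random utility integrand, so that the arguments of Kramkov and Schachermayer \cite{kramkov,kramkov03} apply essentially verbatim. I would introduce the random utility $\tilde U(\omega,x) := Z(\omega) U(x)$, whose Fenchel conjugate in $x$ on $\{Z>0\}$ is $\tilde V(\omega,y) = Z(\omega) V(y/Z(\omega))$, extended by zero on $\{Z=0\}$. This rewrites the primal and dual as
\begin{equation*}
u_Z(x) = \sup_{g\in\cmc(x)} \e[\tilde U(g)], \qquad v_Z(y) = \inf_{Y\in\dmc(y)} \e[\tilde V(Y_T)],
\end{equation*}
where $\cmc(x)$ and $\dmc(y)$ remain the $\mep$-based admissibility and polar sets from \cite{kramkov}. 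On $\{Z>0\}$, $\tilde U(\omega,\cdot)$ is a strictly concave, strictly increasing utility satisfying the Inada conditions, and the key point is that $\mep$ (not $\meq$) satisfies NFLVR, so the $\mep$-bipolar structure between $\cmc(1)$ and $\dmc(1)$ is available.

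For part (i), positive homogeneity and concavity/convexity of $u_Z$ and $v_Z$ in $x$ and $y$ are immediate, and $u_Z(x_0)<\infty$ together with the Inada conditions propagates finiteness to all of $(0,\infty)$. The conjugacy $v_Z = u_Z^*$ follows from the abstract Fenchel-duality argument behind Theorem 3.1 in \cite{kramkov03}, whose only structural inputs are the $\mep$-bipolarity of $(\cmc(1),\dmc(1))$ and the integrability estimate coming from Fenchel's inequality; equivalence of $\meq$ and $\mep$ is never invoked, so the vanishing of $\tilde U$ on $\{Z=0\}$ causes no difficulty. Bi-conjugacy then gives the second relation. The boundary-derivative identities $u_Z'(0+)=\infty$ and $v_Z'(\infty-)=0$ follow by rescaling and monotone convergence from the Inada behaviour of $\tilde U$ in $x$ and of $\tilde V$ in $y$ on $\{Z>0\}$.

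For part (ii), assuming $v_Z(y)<\infty$ for all $y>0$, the essential technical step is the uniform integrability of $\{Z U^+(g): g\in\cmc(x)\}$ under $\mep$. Here I would apply a de la Vall\'ee Poussin-type argument starting from the pointwise Fenchel inequality $Z U(g) \le \tilde V(\lambda Y_T) + \lambda g Y_T$ for any $Y\in\dmc(1)$ and $\lambda>0$; integrating under $\mep$ and using the $\mep$-supermartingale property of $g Y$ for $g\in\cmc(x)$ bounds $\e[Z U^+(g)]$ by $v_Z(\lambda)+\lambda x$ uniformly in $g$, and the superlinear growth of $V$ at infinity supplies the dominating test function. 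The asymptotic relations $u_Z'(\infty-)=0$ and $v_Z'(0+)=-\infty$ then follow by conjugacy. To produce an optimizer I would take a maximizing sequence in $\cmc(x)$, pass to an a.s.-convergent sequence of convex combinations via Koml\'os' lemma, use the closedness of $\cmc(x)$ under a.s. convergence from \cite{kramkov} to keep the limit admissible, and combine the UI just obtained on the positive part with Fatou's lemma on the negative part to identify the limit as an optimizer.

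The main obstacle is exactly the point where we depart from the standard setting: $\meq$ is only absolutely continuous with respect to $\mep$, so $S$ need not satisfy NFLVR under $\meq$ and the theorems of \cite{kramkov,kramkov03} cannot be quoted as black boxes. The technical heart of the proof is therefore to verify that every step in \cite{kramkov,kramkov03} that implicitly uses equivalence of measures can be bypassed by the observation that wealth on $\{Z=0\}$ contributes nothing to either objective, while the $\mep$-defined admissibility and bipolar structure remain in force; once this verification is carried out, no genuinely new idea beyond \cite{kramkov,kramkov03} is required.
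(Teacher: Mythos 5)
Your overall strategy coincides with the paper's: the appendix proof likewise treats the auxiliary problem as a classical Kramkov--Schachermayer problem whose only $\meq$-dependence sits in the objective (the paper even remarks explicitly that one may view it as maximization under $\mep$ of the stochastic utility $\tilde U(x)=Z_TU(x)$), and the decisive observation in both cases is that $\cmc(x)$, $\dmc(y)$ and the bipolar relation $\e[gh]\le xy$ are defined under $\mep$, where NFLVR holds, so the arguments of \cite{kramkov,kramkov03} survive absolute continuity. For part~(i) the paper actually executes what you call the ``abstract Fenchel-duality argument'': a minimax on the truncated sets $\bmc_n=\{0\le g\le n\}$ against $\dmc(y)$, followed by the limit $v_Z^n(y)\to v_Z(y)$, which requires a Koml\'os argument and the uniform integrability of $\{ZV_n^-(\tilde h_n/Z)\}$ (supplied by Lemma 3.6 of \cite{wu}); this last step is the one nontrivial verification your outline elides, so be aware it is not purely formal. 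Where you genuinely diverge is part~(ii): the paper proves the $\mep$-uniform integrability of $\{ZU^+(g)\}$ by the contradiction argument of Lemma 1 in \cite{kramkov03} (a non-UI sequence would violate $u_Z(x)/x\to 0$), whereas you propose direct Fenchel domination, $ZU^+(g)\le ZV^+(\lambda h/Z)+\lambda gh$ with $\e[gh]\le \lambda x$; that route does work, but not via de la Vall\'ee Poussin and not because of ``superlinear growth of $V$ at infinity'' ($V$ is decreasing) --- the correct mechanism is to fix $\lambda$ small so that $\lambda x<\varepsilon/2$, pick $h\in\dmc(\lambda)$ with $\e[ZV^+(h/Z)]<\infty$ (available since $v_Z(\lambda)<\infty$ and $\e[ZV^-(h/Z)]\le|U(1)|\,\e[Z]+\e[h]<\infty$), and run the usual $\varepsilon$--$\delta$ estimate. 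Two further small slips: $u_Z$ and $v_Z$ are not positively homogeneous (only the sets scale, $\cmc(x)=x\,\cmc(1)$, $\dmc(y)=y\,\dmc(1)$), and the boundary identities $u_Z'(\infty)=0$, $v_Z'(0)=-\infty$ belong to part~(ii) (they need $v_Z<\infty$), while part~(i) only gives $u_Z'(0)=\infty$ and $v_Z'(\infty)=0$.
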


	For $\meq\sim\mep$, the above result was established in \cite{kramkov,kramkov03}. For $\meq\ll\mep$, note that while the objective in the auxiliary problem is defined with respect to the measure $\meq$, the set of strategies is still defined in the standard way with respect to the reference measure $\mep$. This is the key reason for the above result to hold also for auxiliary problems defined with respect to $\meq\ll\mep$ (cf. also Remark \ref{normaldual}). Indeed, the proofs in \cite{kramkov,kramkov03} make use of the assumption of NFLVR in order to treat the strategies (via the characterization in \eqref{iff} below), not the objective function. In consequence, Lemma \ref{key} follows by minor modifications the respective proofs in \cite{kramkov,kramkov03}. For completeness, the details are presented in the Appendix. We note that the auxiliary problem also might be viewed as utility maximization under $\mep$ with respect to the stochastic utility function $\tilde U(x)=Z_TU(x)$ and refer to, among others, \cite{follmer2000efficient} for alternative arguments. For an economic interpretation of the condition that $u_Z(x_0)<\infty$, see the discussion on p. \pageref{kardaras}.

	\subsection{Proof of the existence of an optimal investment strategy}\label{existproof}
	
		The proof first establishes upper semicontinuity and quasi-concavity of the objective function. The existence of an optimizer then follows by use of a Komlos-type argument.

		\begin{proof}[Proof of Theorem \ref{existence}]
			For the case when $G(\cdot,t)$ is convex, we refer to Lemma 4.7 in \cite{schied}. Note that the optimization over $X\in\xmc(x)$ can be replaced by optimization over the set $\cmc(x):=\{g\in L^0_+(\f_T):g\le X_T, X\in\mathcal{X}(x)\}$ of random variables. That is to say,
			\bqn
				u(x)=\sup_{g\in\cmc(x)}\inf_{\meq\in\q}G\Big(\meq, \e^\meq\big[U\big(g\big)\big]\Big).\label{ug}
			\eqn

			Due to assumption \eqref{aspt}, $u_\meq(x)<\infty$, $\meq\in\meq$, (cf. Lemma 3.5 in \cite{wu}) and Lemma \ref{key} applies. Hence, $\{U^+(f)\}$, $f\in\cmc(x)$ is $\meq$-UI, $\meq\in\q$. Combined with an application of Fatou's Lemma to the negative part $U^-(f)$, this yields that $g\to\e^\meq[U(g)]$, $g\in\cmc(x)$, is upper semicontinuous for $\meq\in\q$. Since also $G(Z,\cdot)$ is upper semicontinuous and the pointwise infimum of u.s.c. functionals is again u.s.c., it follows that the mapping  
			\bqn
				g\to V(g):=\inf_{\meq\in\q}G\Big(\meq,\e^\meq[U(g)]\Big), \quad g\in \cmc(x),
			\eqn
			is upper semicontinuous as well. Moreover, as $g\to\e[ZU(f)]$ is concave, $G(Z,\cdot)$ is quasi-linear and the point-wise infimum of quasiconcave functionals is again quasiconcave, it follows that $g\to V(g)$ is quasiconcave. More precisely, the semicontinuity and quasi-concavity can be argued as follows. Since $t\to G(q,t)$ is non-decreasing and right-continuous, it holds according to Proposition B.2 in \cite{drapeau10} that
				\bq
					G(q,t)\ge m\quad\textrm{if and only if}\quad t\ge G^{(-1,l)}(q,m),
				\eq		
			where the left inverse is given by $G^{(-1,l)}(q,m)=\inf\{n\in\R:G(q,m)\ge n\}$. Consequently, the following holds:
				\bqn
					\Big\{g\in\cmc(x):G\big(q,\langle U(g),q\rangle\big)\ge m\Big\}
					=\Big\{g\in\cmc(x):\langle q,U(g)\rangle\ge G^{(-1,l)}(q,m)\Big\}.\label{cf}
				\eqn
			This level set is then closed and convex since $g\to\langle q,U(g)\rangle$, $g\in\cmc(x)$, $q\in\q$, is concave and upper semicontinuous according to the above. Then, in turn, it follows that the set
				\bq
					\Big\{g\in\cmc(x):V(g)\ge m\Big\}
					=\cap_{q\in\q}\Big\{g\in\cmc(x):G\big(q,\langle U(g),q\rangle\big)\ge m\Big\},
				\eq
			is closed and convex and, thus, that $g\to V(g)$ is upper semicontinuous and quasiconcave.

			The existence of an optimizer now follows by a standard Komlos-type argument. Indeed, let $(g_n)$ be a sequence in $\cmc(x)$ such that $V(g_n)\nearrow u(x)$. Since each $g_n\ge 0$, Komlos Lemma gives $\tilde g_n\in\textrm{conv}(g_n,g_{n+1},...)$ converging $\mep$-a.s. to some $g$. Since $\cmc(x)$ is convex, $\tilde g_n\in \cmc(x)$. Furthermore, according to Proposition 3.1 in \cite{kramkov}, $\cmc(x)$ is closed under convergence in $L^0$ and, thus, $g\in\cmc(x)$. Next, due to the quasi-concavity of $g\to V(g)$ it holds that
				\bq
					V\left(\tilde g_n\right)=V\bigg(\sum_{k\ge n}\lambda^kg_k\bigg)\ge \inf_{k\ge n} V(g_k)=V(g_n). 
				\eq
			Since $V(\tilde g_n)\le u(x)$ and $V(g_n)\nearrow u(x)$ it, thus, follows that $\lim_{n\to\infty}V(\tilde g_n)=u(x)$; that is to say, $\tilde g_n$ is an optimizing sequence. Then, in turn, the upper semi-continuity of $g\to V(g)$ yields that
				\bq
					V(g)\ge \limsup_{n\to\infty} V(\tilde g_n)=u(x), 
				\eq
			which concludes the proof. 			
		\end{proof}

	\subsection{Proof of the duality results}\label{dualityproof}

		For the proof of Theorem \ref{interchange}, note that since $G(Z,\cdot)$ is non-decreasing, it immediately follows that
			\bqn
				u(x)\le \inf_{\meq\in\q}G\big(\meq,u_{\meq}(x)\big)
				\le \inf_{\meq\in\q}G\big(\meq,v_{\meq}(y)+xy\big)=v(y;x), \quad \textrm{ for all}\; y>0. \label{proo}
			\eqn
		The proof verifies that the inequalities, in fact, are equalities by use of, respectively, Sion's minimax theorem and Lemma \ref{key}.

			\begin{proof}[Proof of Theorem \ref{interchange}]
			\emph{Part i)} For the case when $G(Z,\cdot)$ is concave and $U:\R_+\to\R_+$ let, respectively, $\varepsilon>0$ and $\varepsilon=0$. According to \eqref{finiteness}, $u(x)<\infty$, $x>0$. Hence, for each $g\in\cmc(x)$, 
			\bq
				\inf_{Z\in\q}G\big(Z,\e\big[ZU\left(\varepsilon+g\right)\big]\big)
				\le 
				\sup_{g\in\cmc(x)}\inf_{Z\in\q}G\big(Z,\e\big[ZU\left(\varepsilon+g\right)\big]\big)
				\le u(x+\varepsilon)<\infty.
			\eq
		Consequently,
			\bqn
				\inf_{Z\in\q}G\big(Z,\e\big[ZU\left(\varepsilon+g\right)\big]\big)
				=\inf_{Z\in\tilde \q}G\big(Z,\e\big[ZU\left(\varepsilon+g\right)\big]\big),\label{framtid}
			\eqn
		where $\tilde\q$ is the set of measures in $\q$ for which $G\big(Z,\e[ZU(\varepsilon+g)]\big)\le u(x+\varepsilon)+1$. Since $\e[ZU(\varepsilon+g)]\ge U(\varepsilon)\wedge 0$, for $Z\in\q$, it holds for each $Z\in\tilde\q$ that $G(Z,t)\le c$ for $t:=U(\varepsilon)\wedge 0$ and $c:=u(x+\varepsilon)+1$. Hence, $\tilde\q$ in \eqref{framtid} might be replaced by $\q_t(c)$. Since this holds for each $g\in\cmc(x)$, it thus follows that
			\bqn
				\sup_{g\in\cmc(x)}\inf_{Z\in\q}G\big(Z,\e\big[ZU\left(\varepsilon+g\right)\big]\big)
				=\sup_{g\in\cmc(x)}\inf_{Z\in\q_t(c)}G\big(Z,\e\big[ZU\left(\varepsilon+g\right)\big]\big).\label{gard}
			\eqn

			Next, as $U(\varepsilon+\cdot)$ is bounded from below and $\q_t(c)$ is UI due to Assumption \ref{dualass}, an application of Fatou's Lemma yields $Z\to\e[ZU(\varepsilon+g)]$, $g\in\cmc(x)$, lower semicontinuous with respect to a.s. convergence on $\q_t(c)$. As $\q_{t,T}$ is UI, that is equivalent to lower semicontinuity with respect to convergence in $L^1$. As the functional is convex (affine) that, in turn, implies weak lower semicontinuity. Since $G$ is jointly lower semicontinuous and quasiconvex it, thus, follows that
				\bqn
					Z\to G\big(Z,\e\big[ZU\left(\varepsilon+g\right)\big]\big),\quad g\in\cmc(x),     
  					     \label{semi}
				\eqn
			is weakly lower semicontinuous and quasiconvex. Furthermore, as established above, it holds that 
				\bqn
					g\to G\big(Z,\e\big[ZU\left(\varepsilon+g\right)\big]\big),\quad Z\in\q,\label{quasiconcave}
				\eqn
			is quasiconcave. Recall that $\cmc(x)$ is convex. Moreover, 
				\bq
					G(\lambda Z+(1-\lambda)\bar Z,t)\le \max\{G(Z,t),G(\bar Z,t)\}\le c,
				\eq 
			for $Z,\bar Z\in\q_t(c)$. Hence, $\q_t(c)$ is also convex. The latter set is also weakly compact due to Assumption \ref{dualass}. Given the properties of the mappings defined in \eqref{semi} and \eqref{quasiconcave}, respectively, we might thus apply Sion's minimax theorem (cf. \cite{sion58}). This yields, 
				\bqn
					\sup_{g\in\cmc(x)}\inf_{Z\in\q_t(c)}G\big(Z,\e\big[ZU\left(\varepsilon+g\right)\big]\big)
					=\inf_{Z\in\q_t(c)}\sup_{g\in\cmc(x)}G\big(Z,\e\big[ZU\left(\varepsilon+g\right)\big]\big).\label{sion}
				\eqn
		Note that 
			\bq
				\inf_{Z\in\q}\sup_{g\in\cmc(x)}G\big(Z,\e\big[ZU\left(\varepsilon+g\right)\big]\big)
				\le 
				\inf_{Z\in\q_t(c)}\sup_{g\in\cmc(x)}G\big(Z,\e\big[ZU\left(\varepsilon+g\right)\big]\big)\le u(x+\varepsilon),
			\eq
		where the first inequality is trivial and the second follows from \eqref{sion} combined with \eqref{gard}. Hence, by use of the same argument as above, the set $\q_t(c)$ in the right hand side of \eqref{sion} can be substituted for the set $\q$. That is to say, 
				\bqn
					\sup_{g\in\cmc(x)}\inf_{Z\in\q}G\big(Z,\e\big[ZU\left(\varepsilon+g\right)\big]\big)
					=\inf_{Z\in\q}\sup_{g\in\cmc(x)}G\big(Z,\e\big[ZU\left(\varepsilon+g\right)\big]\big).\label{sion2}
				\eqn
			For the case when $U:\R_+\to\R_+$, this completes the proof of part i).

			For the case when $G(Z,\cdot)$ is concave, a straight-forward argument yields $u(x)$ concave. According to \eqref{finiteness} it is also finite. Consequently, it is continuous as a concave function is continuous on the interior of the set where it is finite (cf. Theorem 10.1 in \cite{rock70}). On the other hand, since the expression on the left-hand-side in \eqref{sion2} clearly is smaller than $u(x+\varepsilon)$, it follows that
				\bqqn
					u(x+\varepsilon)
					&\ge &\inf_{Z\in\q}\sup_{g\in\cmc(x)}G\big(Z,\e\big[ZU\left(g\right)\big]\big)\nn\\
					&\ge &\sup_{g\in\cmc(x)}\inf_{Z\in\q}G\big(Z,\e\big[ZU\left(g\right)\big]\big)\label{magic}
					\;=\;u(x).
				\eqqn
			Since $u(x)$ is (upper semi-) continuous, the result then follows by letting $\varepsilon\searrow 0$.

			Next, according to Assumption \ref{standing}, $\q^f\neq\emptyset$. Since $G$ has an asymptotic maximum in the sense of Definition \ref{defg}, this implies that the set $\q$ on the right-hand side in \eqref{hoppas} can be replaced by the set $\q^f$. For $\meq\in\q^f$, Lemma \ref{key} applies. Hence, use of part i), the fact that $G(Z,\cdot)$ is non-decreasing and the duality relations between $u_Z(x)$ and $v_Z(y)$ given in Lemma \ref{key}, yields
				\bqq
					u(x)&=&
					\inf_{Z\in\q^f}G\Big(Z,u_Z(x)\Big)\\
					&=&\inf_{Z\in\q^f}G\Big(Z,\inf_{y>0}\left(v_Z(y)+xy\right)\Big)
					\;=\;\inf_{y>0}\inf_{Z\in\q^f}G\Big(Z,v_Z(y)+xy\Big).
				\eqq
			Given the definition of $v(y;x)$, it thus only remains to show that 
				\bqn
					\inf_{Z\in\q^f}G\Big(Z,v_Z(y)+xy\Big)=\inf_{Z\in\q}G\Big(Z,v_Z(y)+xy\Big)=:v(y;x).\label{prokl}
				\eqn
			The inequality "$\ge$" follows as $\q^f\subseteq\q$. Without loss of generality, assume $v(y;x)<\infty$ and let $\tilde\q:=\{Z\in\q:v_Z(y)<\infty)\}$. Clearly the set $\q$ on the right-hand side of \eqref{prokl} can then be replaced by $\tilde \q$. On the other hand, $v_Z(y)<\infty$ implies that $u_Z(x)<\infty$. Hence, $\tilde \q\subseteq \q^f$ which yields the inequality "$\le$".

			\emph{Part ii)} Let
			\bq
				H(Z,h):=G(Z,\e[ZV(h/Z)]+xy).
			\eq
			According to Lemma 3.7 in \cite{wu}, $(Z,h)\to\e[ZV(h/Z)]$ is lower semicontinuous. Hence, since $G$ is jointly lower semicontinuous, it follows that so is $(Z,h)\to H(Z,h)$. Furthermore, since $(z,y)\to zV(y/z)$ is convex, $G(Z,\cdot)$ non-decreasing and $G$ jointly quasiconvex, it follows that $(Z,h)\to H(Z,h)$ is jointly quasiconvex. Indeed, let $Z_t=tZ_0+(1-t)Z_1$ and $h_t=th_0+(1-t)h_1$. Then, 
				\bqq
				 H(Z_t,h_t)&=&
				 G(Z_t,\e[Z_tV(h_t/Z_t)]+xy)\\
				 &\le & G(Z_t,t\e[Z_0V(h_0/Z_0)]+(1-t)\e[Z_1V(h_1/Z_1)]+xy)
				 \;\le \; H(Z_0,h_0)\vee H(Z_1,h_1). 
				\eqq		
			Let $(Z_n,h_n)\in\q\times\dmc(y)$ be an optimizing sequence such that
			\bqn
				G\Big(Z^n,\e\left[Z^nV\left(h^n/Z^n\right)\right]+xy\Big)\sumpport_{n\to\infty}v(y;x)<\infty.\label{mot1}
			\eqn
		Note that 
			\bqn
				\e[ZV(h/Z)]+xy \ge U(x), \label{mot2}
			\eqn
		for all $Z\in\q$ and $h\in\mathcal{D}(y)$. Indeed, as $V$ is convex, use of Jensen's inequality yields
			\bq
				\e\big[ZV\left(h/Z\right)\big]
				=\e\big[ZV\left(h/Z\ind_{Z>0}\right)\big] 
				\ge V\big(\e\left[Zh/Z\ind_{Z>0}\right]\big)
				=V\big(\e\left[h\ind_{Z>0}\right]\big)
				\ge V(y),
			\eq
		as $V$ is decreasing and $h\in\dmc(y)$ which implies that $\e[h\ind_{Z>0}]\le \e[h]\le y$. The fact that $G(Z,\cdot)$ is non-decreasing combined with \eqref{mot1} and \eqref{mot2}, then yields
			\bq
				\tilde c:=\limsup_{n\to\infty}G\Big(Z^n,U(x)\Big)<\infty
			\eq
		and w.l.o.g., we can assume that $Z^n\in\{Z\in\q:G(Z,U(x))\le \tilde c+1\}$. That is to say, that $Z^n\in\q_t(c)$, for $t:=U(x)$ and $c:=\tilde c+1$.

		Applying twice the Komlos Lemma, yields a sequence $(\tilde Z_n,\tilde h_n)\in\textrm{conv}\{(Z_n,h_n), (Z_{n+1},h_{n+1}),...\}$ which converges $\mep$-a.s. to some $(Z_0,h_0)$. Since $\q_t(c)$ and $\mathcal{D}(y)$ both are convex, $(\tilde Z_n,\tilde h_n)\in\q_t(c)\times\mathcal{D}(y)$. Furthermore, since $\q_t(c)$ is uniformly integrable (as it is weakly compact) and $\mathcal{D}(y)$ is closed in $L^0$ according to Proposition 3.1 in \cite{kramkov}, it follows that $(Z_0,h_0)\in\q_t(c)\times\mathcal{D}(y)$. Moreover, use of the quasiconvexity yields
				\bq
					H(\tilde Z_n,\tilde h_n) = H\bigg(\sum_{k\ge n}\lambda^kZ_k,\sum_{k\ge n}\lambda^kh_k\bigg)
					\le \max_{k\ge n} H(Z_k,h_k)=H(Z_n,h_n)\searrow v(y). 
				\eq
			Hence, $(\tilde Z_n,\tilde h_n)$ is also an optimizing sequence. Consequently, by use of the lower semicontinuity, it follows that
				\bq
					H(Z_0,h_0)\le \liminf_{n\to\infty} H(\tilde Z_n,\tilde h_n)=v(y),
				\eq
			which proves that the optimum is attained for $(Z_0,h_0)$.

			Next, suppose $(\tilde Z_1,\tilde h_1)$ is another optimal pair. Let $h_t:=t\tilde h_1+(1-t)\tilde h_0$ and $Z_t:=t\tilde Z_1+(1-t)\tilde Z_0$, $t\in[0,1]$. As $(Z,h)\to\e[ZV(h/Z)]$ is convex and $G$ jointly quasiconvex, it thus follows that				
		{\setlength{\arraycolsep}{-0.5cm}
				\bqq
					G(Z_t,\e\left[Z_tV\left(h_t/Z_t\right)\right]+xy)\\
					&&\le G\left(Z_t,t\e\left[Z_1V\left(h_1/Z_1\right)\right]+(1-t)\e\left[Z_0V\left(h_0/Z_0\right)\right]+xy\right)\\
					&&\le  \max\big \{G\left(Z_1,\e\left[Z_1V\left(h_1/Z_1\right)\right]+xy\right),G\left(Z_0,\e\left[Z_0V\left(h_0/Z_0\right)\right]+xy\right)\big\}
					\;=\; v(y),
				\eqq}
			due to the optimality of $(\tilde Z_1,\tilde h_1)$ and $(\tilde Z_0,\tilde h_0)$, respectively. Hence, also $(h_t,Z_t)$ is optimal. We proceed as in the proof of Lemma 4.3 in \cite{schied}. Note that for $t\in(0,1)$, $\{Z_t>0\}=\{Z_0>0\}\cup\{Z_1>0\}$. Also note that according to (25) in \cite{schied}, the ration $h_t/Z_t$, does not depend on $t$. Hence, there exists a random variable $Y_T\ge 0$ and a sequence $\bar Z_1$, $\bar Z_2$, ... such that:
				\begin{itemize}
					\item[(a)]{$\mep[\bar Z_n]$ tends to the maximum $\mep$-probability for the support of any optimal $Z$;}
					\item[(b)]{$\{\bar Z_1>0\}\subseteq \{\bar Z_2>0\}\subseteq ...$;}
					\item[(c)]{for each $n$, $\bar h_n:=Y_T\bar Z_n\in\dmc(y)$ and $(\bar h_n,\bar Z_n)$ is optimal.}
				\end{itemize}
			By use of a Komlos-type argument, we may assume that $\bar Z_n$ converge $\mep$-a.s. to some $\bar Z\in\q$. Then $\bar h:=Y_T\bar Z\in\dmc(y)$ by Proposition 3.1 in \cite{kramkov} (cf. \eqref{iff} below). As above, it follows that $(\bar h,\bar Z)$ is optimal and, clearly, it is maximal. 									
		\end{proof}

	Next, we prove Theorem \ref{relation} which establishes the existence of a saddle point and the link between the primal and dual solutions.

	\begin{proof}[Proof of Theorem \ref{relation}]		
			\emph{Part i)} In order to verify the first statement, it remains to show that the infimum on the right hand side in \eqref{hoppas} is attained. To this end, note that since $U\ge 0$, $G$ is jointly lower semicontinuous and quasiconvex and the operation of point-wise supremum preserves lower semicontinuity and quasiconvexity, it follows that
			\bqn
				Z\to \sup_{g\in\cmc(x)}G\Big(Z,\e\big[ZU(g)\big]\Big),\label{sist}
			\eqn
			is lower semicontinuous and quasiconvex. Next, let $Z^n\in\q$ be a sequence such that			
				\bqn
					G\Big(Z^n,u_{Z^n}(x)\Big)\searrow u(x). 
				\eqn
			As $u_{Z^n}(x)\ge U(x)$, for all $n$ and $G(Z,\cdot)$ is increasing, it follows that
				\bq
					\tilde c:=\limsup_{n\to\infty}G\big(Z^n,U(x)\big)<\infty. 
				\eq
			Hence, w.l.o.g., we may assume $Z^n\in\q_t(c)$ with $t:=U(x)$ and $c=\tilde c+1$. Application of Komlos Lemma, yields a sequence $\tilde Z_n\in\textrm{conv}\{Z_n,Z_{n+1},...\}$ which converges $\mep$-a.s. to some $Z_0$. Since $\q_t(c)$ is convex, $\tilde Z_n\in\q_t(c)$, $n=1,2,...$. Furthermore, since $\q_t(c)$ is uniformly integrable (as it is weakly compact), it follows that $Z_0\in\q_t(c)$. Moreover, as in the proof of Theorem \ref{interchange} part iii), the quasiconvexity implies that also $Z^n$ is an optimizing sequence. Use of the lower-semicontinuity of the mapping in \eqref{sist}, then yields that the infimum is attained for $Z_0$. 
		
			\emph{Part ii)} Let $\tilde g$ and $\widetilde Z$ be a saddle-point for the primal problem. Let $y^*>0$ such that the infimum for the auxiliary conjugacy relations with respect to $\widetilde Z$ are attained for $y^*$. Then, it follows that
				\bq
					u(x)=G\Big(\widetilde Z,u_{\widetilde Z}(x)\Big)=G\Big(\widetilde Z,v_{\widetilde Z}(y^*)+xy^*\Big).
				\eq
			Hence, the infimum in \eqref{udual} is attained for $y^*$. Next, let $(\widehat Z,\hat g)$ be any solution to the dual problem corresponding to $y^*$. Since $G(Z,\cdot)$ is non-decreasing, use of the assumptions and Lemma \ref{key} yields
				\bqn
					u(x)=v(y^*;x)=G\left(\hat Z,v_{\hat Z}(y^*)+xy^*\right)\ge G\left(\hat Z,u_{\hat Z}(x)\right)\ge u(x).\label{sol}
				\eqn
			Hence, we have equality which, in turn, implies that 
				\bq
					u(x)=G\left(\hat Z,u_{\hat Z}(x)\right)
					\ge G\left(\hat Z,\e\Big[\hat ZU(\hat X)\Big]\right)
					\ge \inf_{Z\in\q}G\left(Z,\e\Big[ZU(\hat X)\Big]\right)= u(x).
				\eq
			Consequently, $(\widehat\meq,\widehat X)$ is a saddle point for the primal problem. 
			
			Next, from the definition of $\ymc(y)$, it follows that $\e[X_TY_T]\le xy$, for $X\in\xmc(x)$ and $Y\in\ymc(y)$. Thus, it follows that
				\bqq
					G\left(\hat Z,\e\Big[\hat ZV(\hat Y/\hat Z)+\hat X\hat Y\Big]\right)-G\left(\hat Z,\e\left[\hat ZU(\hat X)\right]\right)
					&=&G\left(\hat Z,\e\left[\hat ZV(\hat Y/\hat Z)\right]+\e\left[\hat X\hat Y\right]\right)-u(x)\\
					&\le &G\left(\hat Z,\e\left[\hat ZV(\hat Y/\hat Z)\right]+xy\right)-u(x)\\
					&=&v(y^*;x)-u(x)\;=\;0.
				\eqq

			Since $G(Z,\cdot)$ strictly increasing, this implies that 
				\bq
					\e^{\hat\meq}\Big[V(\hat Y/\hat Z)+\hat X\hat Y/\hat Z\Big]\le \e^{\hat\meq}\Big[U(\hat X)\Big].
				\eq
			On the other hand $V(\hat Y/\hat Z)+\hat X\hat Y/\hat Z\ge U(\hat X)$, $\hat\meq$-a.s. Consequently, $V(\hat Y/\hat Z)+\hat X\hat Y/\hat Z= U(\hat X)$ $\hat\meq$-a.s. and, thus, it follows that $\hat X=I(\hat Y/\hat Z)$, $\hat\meq$-a.s.
		\end{proof}

	\subsection{Further remarks on the auxiliary problem}

	We conclude with some further remarks on the auxiliary value functions.\\
	
	\begin{rem}\label{normaldual1}
		Let $\mathcal{\hat Y}(y)$ the set of all positive $\meq$-supermartingales such that $Y_0=y$ and $XY$ is a $\meq$-supermartingale for all $X\in\mathcal{X}(1)$. Then, as shown in Lemma 4.2 in \cite{schied}, it holds that
				\bq
					v_\meq(y)=\inf_{Y\in\mathcal{\hat Y}(y)}\e^\meq\big[V(Y_T)\big].
				\eq	
		Indeed, let $0\le s\le t\le T$. For $\hat Y\in\mathcal{\hat Y}(y)$ and $X\in\xmc(1)$,
			\bq
				X_s\hat Y_s\ge \e^\meq\left[X_t\hat Y_t|\f_s\right]=\frac{1}{Z_s}\e\left[X_t\hat Y_tZ_t|\f_s\right],\quad \textrm{$\mep$-a.s. on $\{Z_s>0\}$}. 
			\eq
		On $\{Z_s=0\}$, it holds that $Z_t=0$ $\mep$-a.s. It follows that $X\hat Y Z$ is a $\mep$-supermartingale and hence that $\hat YZ\in\ymc(y)$. Conversely, let $Y\in\ymc(y)$. Then, $\meq$-a.s. for each $X\in\xmc(1)$, 
			\bq
				\e^\meq\left[X_t\frac{Y_t}{Z_t}|\f_s\right]
				=\frac{1}{Z_s}\e\left[Z_tX_t\frac{Y_t}{Z_t}\ind_{Z_t>0}|\f_s\right]
				\le\frac{1}{Z_s}\e\big[X_tY_t\ind_{Z_s>0}|\f_s\big]\le \frac{X_sY_s}{Z_s}\ind_{Z_s>0}.
			\eq 
		Hence, $XY/Z$ is a $\meq$-supermartingale, for all $x\in\xmc(x)$ and $Y/Z\ind_{Z>0}\in\mathcal{\hat Y}(y)$. \\
		\end{rem}

		\begin{rem}\label{normaldual}
		For $\meq\ll\mep$, let $\mathcal{X}_\meq(x)$ the set of wealth-processes such that $X_0\le x$ and $X_t\ge 0$, $\meq$-a.s. on $t\in[0,T]$ and $\mathcal{Y}_\meq(y)$ the set of all positive $\meq$-supermartingales such that $Y_0=y$ and $XY$ is a $\meq$-supermartingale for all $X\in\mathcal{X}_\meq(1)$. Then, consider the following two problems
			\bq
				\tilde u_\meq(x)=\sup_{X\in \xmc_\meq(x)}\e^\meq\big[U(X_T^\pi)\big]\quad\textrm{and}\quad
				\tilde v_\meq(y)=\inf_{Y\in\ymc_\meq(y)}\e^\meq\big[V(Y_T)\big],
			\eq
		where $\inf\emptyset:=\infty$. Note that $\tilde u(x)$ is the standard investment problem with respect to $\meq$ as it is normally defined. However, since it is not clear whether $S_t$ satisfies NFLVR with respect to $\meq$, it is (a priori) not clear whether $\tilde u$ and $\tilde v$ are each others conjugate. Also note that $\mathcal{X}(x)\subseteq \mathcal{X}_\meq(x)$, which, in turn, implies that $\ymc_\meq(y)\subseteq\mathcal{\hat Y}(y)$. Hence,
				\bqn
					u(x)\le \tilde u(x)\quad\textrm{and}\quad v_\meq(y)\le \tilde v_\meq(y).\label{remaining}
				\eqn
		In particular, the condition that $\tilde v_\meq(y)<\infty$, $\meq\in\q$, is therefore a sufficient condition for Theorem \ref{existence} to hold. For $\meq\sim\mep$, \eqref{remaining} holds with equality. The question whether there are models $\meq\ll\mep$, for which the inequality is strict is left for future study. We limit ourselves to note that given that the market is continuous, equality may in fact hold under rather weak conditions. Regardless of this, under the additional assumption \eqref{additional}, it holds for the variational case (i.e. when $G(\meq,t)=\gamma(\meq)+t$) that
				\bq
					v(y;x)=\inf_{\meq\in\q}G\Big(\meq,\tilde v_\meq(y)+xy\Big),
				\eq
		and, in consequence, Theorem \ref{interchange} holds also with $v_\meq(y)$ replaced by $\tilde v_\meq(y)$. Indeed, $v(y;x)=v(y)+xy$, where
			 	\bq
					v(y)=\inf_{Z\in\q^e}\Big(v_Z(y)+\gamma(Z)\Big)
					=\inf_{Z\in\q^e}\Big(\tilde v_Z(y)+\gamma(Z)\Big)
					\ge \inf_{Z\in\q}\Big(\tilde v_Z(y)+\gamma(Z)\Big).
				\eq
		Here the first equality follows from Lemma 4.4 in \cite{schied}. Since $v_\meq(y)\le \tilde v_\meq(y)$, $\meq\in\q$, equality follows. 	
	\end{rem}

\appendix

	\section*{Appendix}\label{keyproof}

	We here provide the proof of Lemma \ref{key}. We stress that the Lemma follows by minor modifications in the respective proofs in \cite{kramkov,kramkov03}. Specifically, in the proofs of Lemmata 3.4 and 3.5 in \cite{kramkov} and Lemma 1 in \cite{kramkov03}. For completeness, the details are presented next. For alternative arguments, see the further discussion in Section \ref{aux}. 

	In preparation for the proof, note that the assumption $u_\meq(x)<\infty$ implies that the expectation operator is defined in the standard way (cf. page \pageref{expoperator}). It also immediately follows that 
		\bq
			u_Z(x)=\sup_{g\in\cmc(x)}\e[ZU(g)]\qquad\textrm{and}\quad
			v_Z(y)=\inf_{h\in\mathcal{D}(y)}\e[ZV(h/Z)],
		\eq
	where the set of random variables $\cmc(x)$ and $\dmc(y)$ are defined by $\cmc(x):=\{g\in L^0_+:g\le X_T,\;\mep\textrm{-a.s.}, X\in\mathcal{X}(x)\}$ and $\mathcal{D}(y)=\{h\in L^0_+:h\le Z,\;\mep\textrm{-a.s.}, Z\in\mathcal{Y}(y)\}$. Moreover, according to Proposition 3.1 in \cite{kramkov}, it holds that
			\bqn
				g\in\cmc(x)\;\textrm{ if and only if }\;\e[gh]\le xy, \;\textrm{for all}\; h\in\mathcal{D}(y).\label{iff}
			\eqn	

	\begin{proof}[Proof of Lemma \ref{key}]
			Consider the mapping $\bmc_n\times\dmc (y)\to\R$, $(g,h)\to\e[ZU(g)-gh]$, where $\bmc_n:=\{g\in L^0_+:0\le g\le n\}$. The set $\dmc(y)$ is convex and since the unit ball in $L^\infty$ is weak*-compact, so is $\bmc_n$. Furthermore, while the above mapping is concave in $g$, it is linear and continuous in $h$, this for the weak*-topology on $L^0$. Hence, the minimax theorem (cf. Theorem 2.7.1 in \cite{aubin00}) can be applied in order to obtain
				\bqn
					\sup_{g\in\bmc_n}\inf_{h\in\dmc(y)}\e[ZU(g)-gh]
					=\inf_{h\in\dmc(y)}\sup_{g\in\bmc_n}\e[ZU(g)-gh].\label{1}
				\eqn
			Next, \eqref{iff} implies that
				\bqn
					\lim_{n\to\infty}\sup_{g\in\bmc_n}\inf_{h\in\dmc(y)}\e[ZU(g)-gh]
					=\sup_{x>0}\big(u_Z(x)-xy\big). \label{2}
				\eqn
			Indeed, we first see that \eqref{iff} implies that $\inf_{h\in\dmc(y)}\e[ZU(g)-gh]\ge \e[ZU(g)]-xy$. Taking the supremum over $g\in\bmc_n$ and $g\in\cmc(x)\cap\bmc_n$ on the left- and right-hand side, respectively and, then, in turn, letting $n\to\infty$ yields the inequality $\ge$ in \eqref{2}. Next, we fix $n$ and let $g\in\bmc_n$ and $x^*:=\inf\{x>0:g\in \cmc(x)\}$. Without loss of generality, let $x^*>0$. Then it holds that $g\in\cmc(x^*+\varepsilon)$ but $g\not\in\cmc(x^*-\varepsilon)$. Thus, using \eqref{iff}, it follows that
				\bqq
					\inf_{h\in\dmc(y)}\e[ZU(g)-gh]
					&<&\e[ZU(g)]-(x^*-\varepsilon)y \\
					&\le& u_Z(x^*+\varepsilon)-(x^*+\varepsilon)y+2\varepsilon y 
					\;\le\; 2\varepsilon y+\sup_{x>0}\big(u_Z(x)-xy\big). 
				\eqq
			Letting $\varepsilon\searrow 0$, yields that for $g\in\bmc_n$ and $n\in\mathbb{N}$,
			\bq
				\inf_{h\in\dmc(y)}\e[ZU(g)-gh]\le \sup_{x>0}\big(u_Z(x)-xy\big).
			\eq 
			This completes the proof of \eqref{2}. 
			
			Next, let $V_n(y)=\sup_{0\le x\le n}\big(U(x)-xy\big)$ and note that
				\bqqn
					\sup_{g\in\bmc_n}\e[ZU(g)-gh]
					&=& \e\big[\sup_{0<x\le n}\left\{ZU(x)-xh\right\}\big]\nn\\
					&=& \e\big[\sup_{0<x\le n}\left\{\left(ZU(x)-xh\right)\ind_{Z>0}\right\}\big]
					\;=\;\e[ZJ_n(h/Z)],
				\eqqn
			where it was used that $\sup_{0<x<n}\{ZU(x)-xh\}=0$ on $\{Z=0\}$. Therefore, it holds that
				\bqn
					\inf_{h\in\dmc(y)}\sup_{g\in\bmc_n}\e[ZU(g)-gh]=
					\inf_{h\in\dmc(y)}\e[ZJ_n(h/Z)]=:v_Z^n(y).\label{int}
				\eqn
			Combining \eqref{1}, \eqref{2} and \eqref{int}, we easily see that in order to show the first conjugacy relation in i), it only remains to show that
				\bqn
					\lim_{n\to\infty}v_Z^n(y)=v(y),\qquad y>0. \label{3}
				\eqn
			To this end, let $h_n\in\dmc(y)$ be a sequence such that 
				\bq
					\lim_{n\to\infty}v_Z^n(y)=\lim_{n\to\infty}\e\left[ZV^n\left(h_n/Z\right)\right]. 
				\eq
			According to Komlos Lemma, there exist $\tilde h_n\in\textrm{conv}(h_n,h_{n+1},...)$ converging $\mep$-a.s. to some $h$ which belongs to $\dmc(y)$ as the latter set, according to Proposition 3.1 in \cite{kramkov}, is closed under convergence in probability. Moreover, as $h\to zJ(h/z)$ is convex (in \cite{wu}, it is verified that also $(y,z)\to zV(y/z)$ is convex, this seems not needed here though) and $V^n\le V^m$, $m\ge n$, it follows that			
				\bqq
					\inf_{m\ge n} \e\big[ZV^m\big(\tilde h_m/Z\big)\big]
					&\le& \e\big[ZV^n\big(\tilde h_n/Z\big)\big]\\
					&\le& \sum_{m\ge n}\lambda^m \e\big[ZV^n\big(h_m/Z\big)\big]
					\;\le\; \sup_{m\ge n} \e\left[ZV^m\left(h_m/Z\right)\right].
				\eqq
			Hence, it holds that
				\bq
					\lim_{n\to\infty}v_Z^n(y)=\limsup_{n\to\infty}\e\left[ZV^n\left(h_n/Z\right)\right]
					\ge \liminf_{n\to\infty}\e\big[ZV^n\big(\tilde h_n/Z\big)\big].
				\eq
			Consequently, if it can be shown that the set $\big\{ZV_n^-(\tilde h_n/Z):n\in\mathbb{N}\big\}$ is UI, then use of Fatou's Lemma yields
				\bq
					\lim_{n\to\infty}v_Z^n(y)\ge  \liminf_{n\to\infty}\e[ZV^n(\tilde h_n/Z)]\ge \e[ZV^n(h/Z)]\ge v_Z(y), 
				\eq 
			where the last inequality follows as $h\in\dmc(y)$. Since, $v_Z^n(y)\le v_Z(y)$, this concludes the proof of the first conjugacy relation. 
			
			It remains to establish the uniform integrability $\big\{ZV_n^-(\tilde h_n/Z):n\in\mathbb{N}\big\}$. To this end, note that for $I(y)\le n$, it holds that $V_n(y)=V(y)$. As $V^-_n$ is increasing in $y$ and decreasing in $n$, it thus follows that
				\bq
					ZV_n^-(\tilde h_n/Z)\le ZV^-(\tilde h_n/Z)+ZV^-_1(U'(1)).
				\eq
			Next, we note that $Z$ is integrable. According to Lemma 3.6 in \cite{wu}, for a set $\q\subset\{\meq\ll\mep\}$ which is UI, it holds that the set $\{ZV^-(h/Z):h\in\dmc(y), Z\in\q\}$ is UI. Hence, the uniform integrability of $\{ZV^-(h/Z):h\in\dmc(y)\}$ follows as a special case thereof. Hence, $\big\{ZV_n^-(\tilde h_n/Z):n\in\mathbb{N}\big\}$ is UI which completes the proof of the first conjugacy. 
			
			The reverse conjugacy follows directly from the first one. Indeed, due to assumption $u_Z(x_0)<\infty$ for some $x_0>0$. Hence, it is finite for all $x>0$ and, furthermore, it is concave. Consequently, the reverse conjugacy follows from Theorem 12.2 in \cite{rock70}.

		Next, we turn to the proof of statement ii). Due to the conjugacy relations in i), the assumption that $v_Z(y)<\infty$ is equivalent to (cf. Note 1 in \cite{kramkov03})
		\bqn
			\lim_{x\nearrow\infty}u_z(x)/x=0.\label{twill}
		\eqn
		Hence, the $\meq$-uniform integrability of $(U^+(g_n))$, $n\ge 1$, can be established as in Lemma 1 in \cite{kramkov03}. For completeness, we highlight the main steps. To this end, we assume contrary to the claim that the sequence is not $\meq$- uniformly integrable. Then, (passing if necessary to a subsequence), one can find $\alpha>0$ and a disjoint sequence $(A^n)_{n\ge 1}$ of $(\Omega,\f)$ such that, for $n\ge 1$,
		\bq
			\e^\meq[U^+(g^n)\ind_{A^n}]\ge\alpha.
		\eq
	Define the sequence of random variables $(\tilde g^n)_{n\ge 1}$ by
		\bq
			\tilde g^n=x_0+\sum_{k=1}^ng^n\ind_{A^n},
		\eq
	where $x_0:=\inf\{x>0:U(x)\ge 0\}$. For any $h\in\dmc(1)$, it then holds that under $\mep$
		\bq
			\e^\mep[\tilde g^nh]\le x_0+\sum_{k=1}^n\e^\mep[\tilde g^nh]\le x_0+nx.
		\eq
	Hence, $\tilde g^n\in\cmc(x_0+nx)$. On the other hand, it holds that under $\meq$
		\bq
			\e^\meq[U(\tilde g^n)]\ge \sum_{k=1}^n\e^\meq[U^+(\tilde g^n)\ind_{A^n}]\ge \alpha n,
		\eq
	and therefore,
		\bq
			\limsup_{x\to\infty}\frac{u_Z(x)}{x}\ge \limsup_{x\to\infty}\frac{\e^\meq[U(\tilde g^n)]}{x_0+nx}\ge \limsup_{x\to\infty}\frac{\alpha n}{x_0+nx}=\alpha>0.
		\eq
	This contradicts the assumptions (cf. \eqref{twill}) which concludes the proof. 
\end{proof}

\bibliography{refs}        
\bibliographystyle{plain}  

\end{document}